 \let\mathscr\relax
\newtheorem{theorem}{Theorem}[section]
\newtheorem{lemma}[theorem]{Lemma}
\begin{document}
\newcommand{\zf}[1]{\textcolor{red}{[\small zf: ~#1~]}}
\title[]{ A Unified Approach to Scalable Spectral Sparsification of Directed Graphs}

\author{Ying Zhang}
\affiliation{
\institution{Stevens Institute of Technology}
\city{Hoboken}
\state{New Jersey}
\postcode{07030}
}
\email{yzhan232@stevens.edu}

\author{Zhiqiang Zhao}
\affiliation{
\institution{Michigan Technological University}
\city{Houghton}
\state{Michigan}
\postcode{49931}
}
\email{qzzhao@mtu.edu}

\author{Zhuo Feng}
\affiliation{
\institution{Stevens Institute of Technology}
\city{Hoboken}
\state{New Jersey}
\postcode{07030}
}
\email{zhuo.feng@stevens.edu}
\begin{abstract}
 Recent spectral graph sparsification research allows constructing nearly-linear-sized subgraphs that can well preserve the spectral (structural) properties of the original graph, such as the first few eigenvalues and eigenvectors of the graph Laplacian, leading to the development of a variety of nearly-linear time numerical and graph algorithms. However,  there is not a unified approach that allows for truly-scalable spectral sparsification of both directed and undirected graphs. In this work, we prove the existence of linear-sized spectral sparsifiers for general directed graphs and introduce a practically-efficient and unified spectral graph sparsification approach that allows sparsifying real-world, large-scale directed and undirected graphs with guaranteed preservation of the original graph spectra. By exploiting a highly-scalable (nearly-linear complexity) spectral matrix perturbation analysis framework for constructing nearly-linear sized (directed) subgraphs,  it enables us to well preserve the key eigenvalues and eigenvectors of the original (directed) graph Laplacians. The proposed method has been validated using various kinds of directed graphs obtained from public domain sparse matrix collections, showing promising results for solving directed graph Laplacians, spectral embedding, and partitioning of general directed graphs, as well as approximately computing (personalized) PageRank vectors.
\end{abstract}
\begin{CCSXML}
<ccs2012>
 <concept>
  <concept_id>10010520.10010553.10010562</concept_id>
  <concept_desc>Computer systems organization~Embedded systems</concept_desc>
  <concept_significance>500</concept_significance>
 </concept>
 <concept>
  <concept_id>10010520.10010575.10010755</concept_id>
  <concept_desc>Computer systems organization~Redundancy</concept_desc>
  <concept_significance>300</concept_significance>
 </concept>
 <concept>
  <concept_id>10010520.10010553.10010554</concept_id>
  <concept_desc>Computer systems organization~Robotics</concept_desc>
  <concept_significance>100</concept_significance>
 </concept>
 <concept>
  <concept_id>10003033.10003083.10003095</concept_id>
  <concept_desc>Networks~Network reliability</concept_desc>
  <concept_significance>100</concept_significance>
 </concept>
</ccs2012>
\end{CCSXML}


\keywords{Spectral graph theory, directed graphs, PageRank, Laplacian solver, graph partitioning}
\maketitle
\section{Introduction}\label{sect:introduction}
Many  research problems for {simplifying large graphs leveraging spectral graph theory}  have been  extensively studied by mathematics and theoretical computer science (TCS) researchers in the past decade \cite{ batson2012twice, spielman2011spectral, kolev2015note,peng2015partitioning, cohen2017almost,Lee:2017,cohen:focs18}.  Recent \emph{{spectral graph sparsification }} research allows   constructing nearly-linear-sized \footnote{The number of edges is close to the number of nodes in the sparsifier.} subgraphs that can well preserve the spectral (structural) properties of the original graph, such as the  the first few eigenvalues and eigenvectors of the graph Laplacian. The related results can potentially lead to the development of a variety of {\emph{nearly-linear time}}  numerical and graph algorithms for solving large sparse  matrices and partial differential equations (PDEs), graph-based semi-supervised learning (SSL), computing the  stationary distributions of Markov chains and personalized PageRank vectors, spectral graph partitioning and data clustering, max flow and multi-commodity flow of undirected graphs, nearly-linear time circuit simulation and verification algorithms, etc \cite{miller:2010focs, spielman2011spectral, christiano2011flow, spielman2014sdd,kelner2014almost,cohen2017almost,cohen:focs18,zhuo:dac16,zhiqiang:dac17,zhiqiang:iccad17, zhuo:dac18}.

However,  {there is not a unified approach that allows for truly-scalable spectral sparsification of both  directed and undirected graphs}. For example, the state-of-the-art  sampling-based methods  for spectral  sparsification   are only applicable to undirected graphs \cite{spielman2011graph,miller:2010focs,spielman2014sdd}; the latest algorithmic breakthrough in spectral sparsification of directed graphs \cite{cohen2017almost,cohen:focs18} can only handle strongly-connected directed graphs \footnote{A strongly connected directed graph is a directed graph in which any node can be reached from any other node along with direction.}, which inevitably limits its applications when confronting  real-world graphs, since many directed graphs may not be strongly connected, such as the graphs used in chip design automation (e.g. timing analysis) tasks as well as the graphs used in machine learning and data mining  tasks. Consequently, there is still a pressing need for the development of highly-robust (theoretically-rigorous) and truly-scalable (nearly-linear complexity) algorithms for reducing real-world large-scale (undirected and directed) graphs while preserving key graph spectral (structural) properties.

This paper  proves the existence of linear-sized spectral sparsifiers for general directed graphs, and introduces  a {{practically-efficient and unified spectral sparsification approach}} that allows simplifying real-world, large-scale  directed and undirected graphs with guaranteed preservation of the original graph spectra. More specifically, we  exploit a highly-scalable (nearly-linear complexity) spectral matrix perturbation analysis framework for constructing ultra-sparse (directed) subgraphs that can well preserve the key eigenvalues and eigenvectors of the original graph Laplacians. Unlike the prior state-of-the-art methods  that are only suitable for handling specific types of graphs (e.g. undirected or strongly-connected directed graphs \cite{spielman2011graph,cohen2017almost}), the proposed approach is more general and thus will allow for truly-scalable spectral sparsification of a much wider range of real-world complex graphs that may involve billions of elements. The spectrally-sparsified directed graphs constructed by the proposed approach will potentially lead to the development of much faster numerical and graph-related algorithms. For example, spectrally-sparsified social (data) networks allow for more efficient  modeling and analysis of large social (data) networks; spectrally-sparsified neural networks  allow for  more scalable model training and processing in emerging machine learning tasks; spectrally-sparsified web-graphs allow for much faster computations of personalized PageRank vectors; spectrally-sparsified integrated circuit networks will lead to more efficient partitioning, modeling, simulation, optimization and verification of large chip designs, etc.

The rest of this paper is organized as follows.  Section
\ref{background_sec} provides a brief introduction to   graph Laplacians and spectral sparsification of directed graphs.  In Section \ref{sec:maintech}, a scalable and unified spectral sparsification framework for general graphs is described in detail. Section \ref{sec:practical}   describes a practically-efficient spectral sparsification approach, while Section \ref{sec:application}  introduces potential applications of the proposed graph sparsification framework. Section
\ref{sec:result}  demonstrates extensive  experimental results for a variety of real-world, large-scale directed graphs, which is followed by the conclusion of this work in Section \ref{conclusion}.
\section{Preliminaries}\label{background_sec}

\subsection{Laplacians for  (un)directed graphs}
 Consider a directed graph $G=(V,E_G,w_G)$ with $V$ denoting the set of vertices, $E_G$ representing the set of directed edges, and $w_G$ denoting the associated edge weights. In the following, we denote the diagonal matrix by $\mathbf{D_G}$ with ${D_G}(i,i)$ being equal to the (weighted) outdegree of node $i$, as well as the adjacency matrix of $G$ by $\mathbf{A_G}$:
\begin{equation}\label{di_laplacian}
A_G(i,j)=\begin{cases}
w_{ij} & \text{ if } (i,j)\in E_G \\
0 & \text{otherwise }.
\end{cases}
\end{equation}
Then  the directed Laplacian  matrix can be constructed as follows \cite{cohen2017almost}:
\begin{equation}\label{formula_laplacian}
\mathbf{L_G=D_G-A_G^\top}.
\end{equation}
Let $n=|V|$, $m=|E_G|$,  and undirected graphs  can be converted into  equivalent directed graphs by replacing each undirected edge with two opposite directed edges. While for most directed graphs $\mathbf{L_G}$ may not be a symmetric matrix. 

It can be shown that  any directed (undirected) graph Laplacian constructed using (\ref{formula_laplacian}) will satisfy the following properties: \textbf{1)} Each column (and row) sum is equal to zero; \textbf{2)} All off-diagonal elements are non-positive; \textbf{3)} The Laplacian matrix is asymmetric (symmetric) and indefinite (positive semidefinite). 

\subsection{Spectral   sparsification of undirected graphs}
Graph sparsification aims to find a subgraph (sparsifier) $S=(V,E_S,w_S)$ that has the same set of vertices but  much fewer edges than the original graph $G$. There are two types of  sparsification methods: the  cut sparsification methods  preserve  cuts in the original graph through random sampling of edges \cite{benczur1996approximating}, whereas spectral sparsification methods preserve the graph spectral (structural) properties, such as distances between vertices, effective resistances, cuts in the graph, as well as the  stationary distributions of Markov chains \cite{cohen2017almost,cohen:focs18,spielman2011spectral}.  Therefore, spectral graph sparsification is a much stronger notion than cut sparsification. 

{For {undirected graphs}}, spectral sparsification aims to find an ultra-sparse subgraph proxy that is {spectrally-similar} to the original one.  $G$ and $S$ are  said to be {$\sigma$-spectrally similar} if the following condition holds for all real vectors $\mathbf{x} \in \mathbb{R}^V$: $\frac{\mathbf{x}^\top{\mathbf{L_S}}\mathbf{x}}{\sigma}\le \mathbf{x}^\top{\mathbf{L_G}}\mathbf{x} \le \sigma \mathbf{x}^\top{\mathbf{L_{S}}}\mathbf{x}$, where $\mathbf{L_{G}}$ and $\mathbf{L_S}$ denote the symmetric diagonally dominant (SDD) Laplacian matrices  of graphs $G$ and $S$, respectively. By defining the relative condition number to be $\kappa({\mathbf{L_G}},{\mathbf{L_S}})=\lambda_{\max}/\lambda_{\min}$, where $\lambda_{\max}$ ($\lambda_{\min}$) denotes the largest (smallest) eigenvalues of $\mathbf{L_S^{+} L_G}$, and $\mathbf{L_S^+}$ denotes the Moore-Penrose pseudoinverse of $\mathbf{L_S}$,  it can be further shown that $\kappa(\mathbf{L_G},\mathbf{L_S})\le\sigma^2$, implying that a smaller relative condition number or $\sigma^2$ corresponds to a higher (better) spectral similarity between two graphs.  
\subsection{Spectral   sparsification of directed graphs} 
A significant progress has been made for spectral analysis of {directed graphs }in \cite{chung2005laplacians}, which for the first time has proved the Cheeger inequality for directed graphs and  shown the connection between directed graph partitioning   and the smallest (nontrivial) eigenvalue of directed Laplacian. More specifically, the transition probability matrix and the stationary distributions of Markov chains have been exploited for constructing the undirected Laplacians for strongly-connected directed graphs. 
The latest algorithmic breakthrough in spectral sparsification   for strongly-connected directed  graphs has been introduced based on the results in \cite{chung2005laplacians}, which proposes to first convert  strongly connected graphs into  Eulerian graphs via Eulerian scaling, and subsequently sparsify the undirected graphs obtained via  Laplacian symmetrization \cite{chung2005laplacians} by leveraging existing spectral graph theory for undirected graphs \cite{cohen2017almost}. It has been shown that such an approach for directed graphs can potentially lead to the development of almost-linear-time algorithms for  solving asymmetric linear systems, computing the stationary distribution of a Markov chain, computing expected commute times in a directed graph, etc \cite{cohen2017almost}.

 {For { {directed graphs}}}, the subgraph $S$ can be considered spectrally similar to the original  graph $G$ if the    condition number or the ratio between the largest and smallest singular values of $\mathbf{L_S^{+} L_G}$ is close to $1$ \cite{cohen2017almost,cohen:focs18}.  Since the singular values of $\mathbf{L_S^{+} L_G}$ correspond to  the square roots of eigenvalues of  $\mathbf{(L_S^{+}L_G )^\top(L_S^{+}L_G )}$,   {spectral sparsification of directed graphs} is equivalent to finding an ultra-sparse subgraph $S$ such that the  condition number of $\mathbf{(L_S^{+}L_G )^\top(L_S^{+}L_G )}$ is small enough.

\section{A Unified  Sparsification Framework} \label{sec:maintech}
\subsection{Overview of our approach} 
We introduce {a unified spectral graph sparsification framework} that allows handling  both directed and undirected graphs in nearly-linear time. The core idea of our approach is to leverage a novel {spectrum-preserving Laplacian symmetrization} procedure to convert   directed graphs  into  undirected ones (as shown in Figure \ref{fig:symmetrization}).  Then existing spectral sparsification methods for undirected graphs \cite{spielman2011spectral,batson2012twice, zhuo:dac16,zhuo:dac18,Lee:2017} can be exploited for directed graph spectral sparsification tasks.  

Our approach for symmetrizing directed graph Laplacians is motivated by the following fact:  the eigenvalues of $\mathbf{(L_S^{+}L_G )^\top(L_S^{+}L_G )}$ will always correspond to the eigenvalues of $\mathbf{\left(L_S L^\top_S\right)^+ L_G L_G^\top}$ under the condition that $\mathbf{L_G}$ and $\mathbf{L_S}$ are diagonalizable. It can be shown that $\mathbf{L_G L^\top_G}$ and $\mathbf{L_S L^\top_S}$ can be considered as  special  graph Laplacian matrices corresponding to   undirected graphs that may contain negative edge weights. Consequently, as long as  a directed subgraph $S$ can be found  such that  the  undirected graphs corresponding to $\mathbf{L_S L^\top_S}$ and  $\mathbf{L_G L_G^\top}$ are spectrally similar to each other, the subgraph $S$  can be considered spectrally similar to the original directed graph $G$.  Unlike the recent theoretical breakthrough in directed graph sparsification \cite{cohen2017almost,cohen:focs18}, our approach    does not require the underlying directed graphs to be strongly connected, and thus can be applied to a much wider range of large-scale real-world problems, such as the neural networks adopted in many machine learning and data mining applications \cite{goodfellow2016deep,Ying:kdd18}, the directed graphs (e.g. timing graphs) used in various circuit analysis and optimization tasks \cite{micheli1994synthesis},  etc.
 \begin{figure}
\centering \epsfig{file=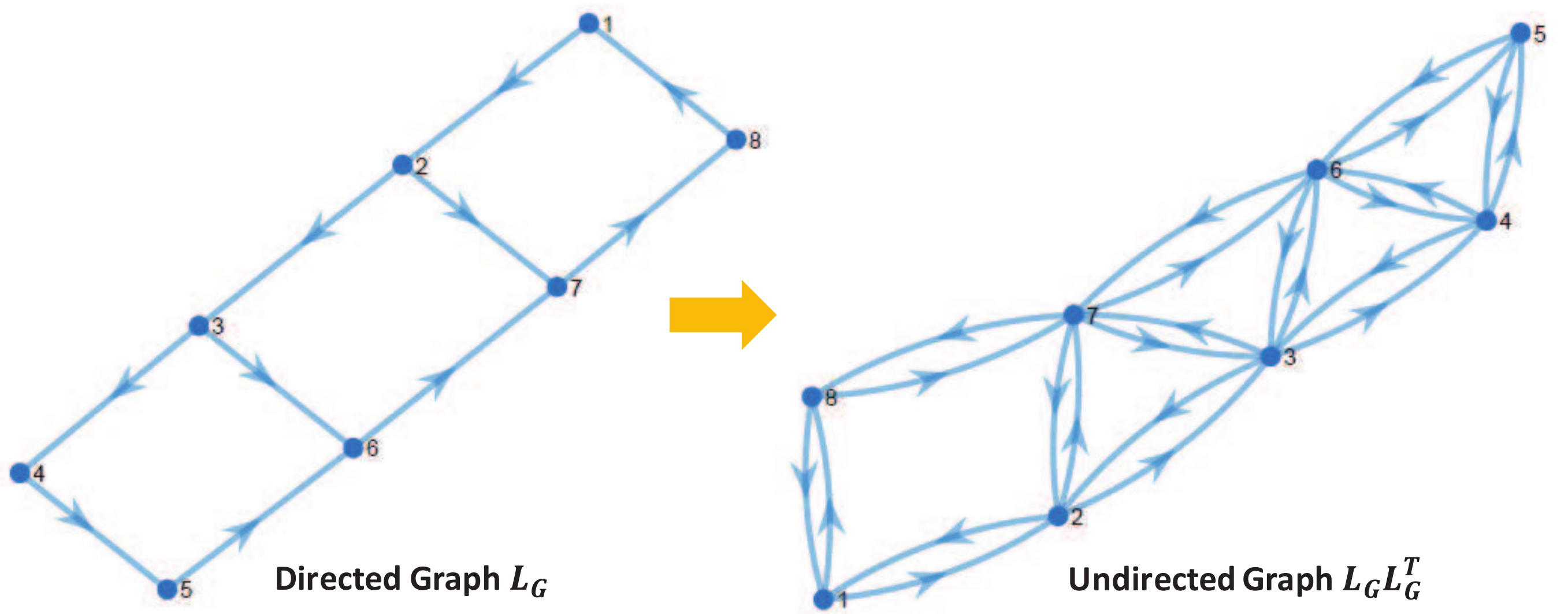, scale=0.32} \caption{Converting a directed graph $G$  into an undirected graph $G_u$  via Laplacian symmetrization. \protect\label{fig:symmetrization}}
\end{figure}

In the following,   assume that $G=(V,E_G, {w_G})$ is a weighted directed graph, whereas $S=(V,E_S, {w}_S)$ is its initial spectral sparsifier (subgraph), such as a spanning  subgraph. Define $\mathbf{L_{G_u}=L_G L_G^\top}$ and $\mathbf{L_{S_u}= L_S L_S^\top}$ to be the  undirected graph Laplacians obtained through the proposed  symmetrization procedure for   $G$ and $S$.

\subsection{Spectrum-preserving  symmetrization}
Performing singular value decomposition (SVD) on $\mathbf{L_G}$ leads to $\mathbf{L_G}=\sum _i \sigma _i \mathbf{\zeta_i}{\mathbf{ \eta^\top_i}}$, where $\mathbf{\zeta_i}$ and $\mathbf{\eta_i}$ are the left and right eigenvectors of $\mathbf{L_G}$, respectively. \footnote{The pseudoinverse of $\mathbf{L_G}$  is $\mathbf{{L_G}^+=\sum _i \frac{1}{\sigma _i}\eta_i \zeta^\top_i}$} It should be noted that $\mathbf{\zeta_i}$ and $\mathbf{\eta_i}$ with $i=1,...,n$ span the eigenspace of $\mathbf{L_G}{\mathbf{L^\top_G}}$ and ${\mathbf{L^\top_G}}\mathbf{L_G}$, respectively. Since the  eigenspace related to outgoing edges of directed graphs needs to be preserved,  we will only focus on the  Laplacian symmetrization  matrix $\mathbf{L_{G_u}}={\mathbf{L_G}}\mathbf{L_G}^\top$ that is also a  Symmetric Positive Semi-definite (SPS) matrix.

\begin{figure}
\centering \epsfig{file=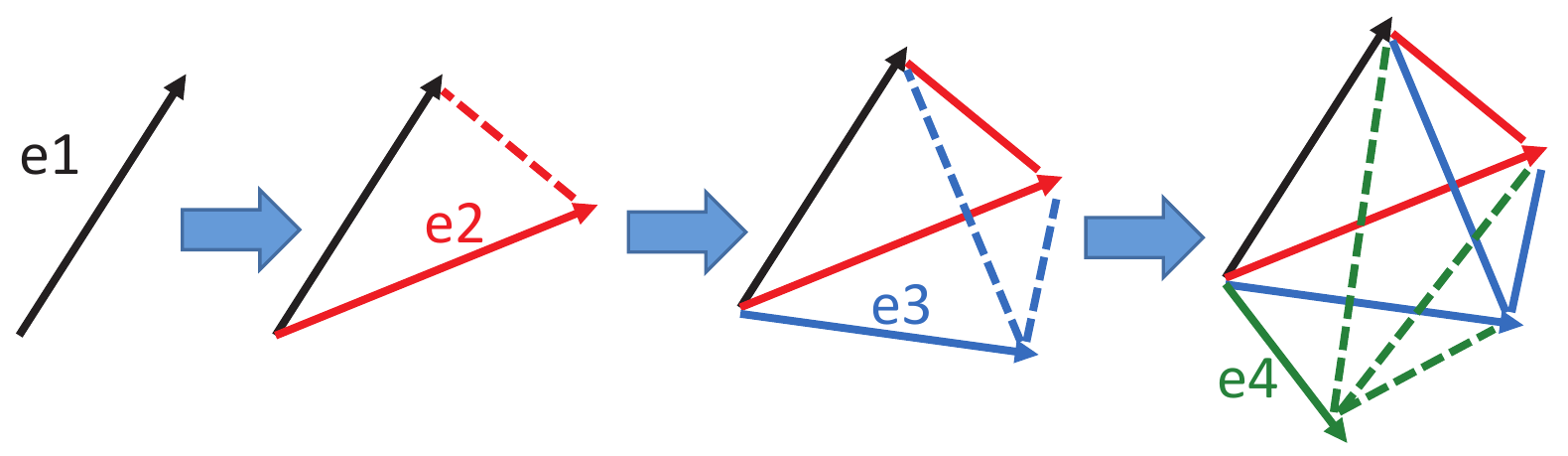, scale=0.51} \caption{Edge coupling during directed Laplacian symmetrization. \protect\label{fig:symetrization}}
\end{figure}

\begin{theorem}\label{thm:directed symmetrization}
For any directed Laplacian $\mathbf{L_G}$, its  undirected graph Laplacian $\mathbf{L_{G_u}}$ after symmetrization will have the all-one vector as its null space and correspond to an undirected graph that may include negative edge weights. 
\end{theorem}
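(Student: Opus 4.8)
The plan is to reduce both assertions to a single structural fact about $\mathbf{L_G}$ that is already recorded in the preliminaries, namely that every column sum of $\mathbf{L_G}$ vanishes. Written in matrix form this says $\mathbf{L_G^\top}\mathbf{1}=\mathbf{0}$, where $\mathbf{1}$ is the all-one vector; I will use only this property, so that the argument is insensitive to whether the row sums of $\mathbf{L_G}$ vanish as well.

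First I would establish the null-space claim. Substituting $\mathbf{L_G^\top}\mathbf{1}=\mathbf{0}$ into the definition $\mathbf{L_{G_u}}=\mathbf{L_G L_G^\top}$ gives
\begin{equation}
\mathbf{L_{G_u}}\mathbf{1}=\mathbf{L_G}\left(\mathbf{L_G^\top}\mathbf{1}\right)=\mathbf{0},
\end{equation}
so $\mathbf{1}$ lies in the null space of $\mathbf{L_{G_u}}$. I would then check that $\mathbf{L_{G_u}}$ has the algebraic form of an undirected Laplacian: symmetry is immediate from $\mathbf{L_{G_u}^\top}=(\mathbf{L_G L_G^\top})^\top=\mathbf{L_{G_u}}$, and positive semidefiniteness follows from $\mathbf{x^\top L_{G_u} x}=\lVert\mathbf{L_G^\top x}\rVert_2^2\ge 0$ for all $\mathbf{x}$. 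Because $\mathbf{L_{G_u}}$ is symmetric, its row sums equal its column sums, and the display above shows both vanish. Any symmetric matrix with zero row sums is the Laplacian of a weighted undirected graph, with the weight of edge $(i,j)$ read off as $w^u_{ij}=-\mathbf{L_{G_u}}(i,j)$ for $i\ne j$ and the diagonal fixed by the zero-sum condition; this identification is forced and needs no extra hypotheses, so $\mathbf{L_{G_u}}$ does correspond to an undirected graph.

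The only genuinely non-routine point---and the part I expect to be the main obstacle---is showing that the induced weights $w^u_{ij}$ need not be nonnegative, i.e. that some off-diagonal entry of $\mathbf{L_{G_u}}$ can be strictly positive. Writing the off-diagonal entry as the inner product of two distinct rows of $\mathbf{L_G}$,
\begin{equation}
\mathbf{L_{G_u}}(i,j)=\sum_k \mathbf{L_G}(i,k)\,\mathbf{L_G}(j,k),
\end{equation}
one sees that, while the diagonal is a sum of squares and hence nonnegative, the off-diagonal carries a contribution $w_{ki}w_{kj}\ge 0$ from every common predecessor $k$ of $i$ and $j$, which is exactly the edge-coupling effect depicted in Figure~\ref{fig:symetrization}. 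To finish the argument rigorously it suffices to exhibit one minimal witness: a single source node with two unit-weight out-edges to distinct targets $i$ and $j$ and no direct edge between them yields $\mathbf{L_{G_u}}(i,j)=1>0$, hence $w^u_{ij}=-1<0$. Since the theorem only claims that negative weights \emph{may} appear, this one instance completes the proof, and I would present it rather than attempt to characterize when coupling outweighs the direct-edge terms.
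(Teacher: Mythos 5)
Your proof is correct, and its skeleton is the same as the paper's: both arguments rest on the vanishing column sums of $\mathbf{L_G}$ for the null-space claim, and both trace the negative weights to the coupling contributed by common predecessors (the clique effect of Figure~\ref{fig:symetrization}). The differences are in execution, and they work in your favor. Where you write the one-line identity $\mathbf{L_{G_u}}\mathbf{1}=\mathbf{L_G}\left(\mathbf{L_G^\top}\mathbf{1}\right)=\mathbf{0}$, the paper expands $\mathbf{L_{G_u}}$ entrywise as in (\ref{di_laplacian_entry}) and regroups each row sum so that the inner factor becomes a column sum of $\mathbf{L_G}$; these are the same argument at different levels of abstraction. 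For the negative-weight clause, however, the paper only states a general inequality under which a weight goes negative and illustrates the coupling pictorially; it never exhibits a concrete graph realizing that inequality, so strictly speaking the existential part of the claim is left at the level of plausibility. Your three-node witness (one source with two unit-weight out-edges, giving $\mathbf{L_{G_u}}(i,j)=1$ and hence weight $-1$) is exactly what is needed to make that part rigorous, and your preliminary observation that any symmetric matrix with zero row sums is a legitimate weighted undirected Laplacian nails down the ``corresponds to an undirected graph'' clause, which the paper takes for granted. One caution when you compare against the paper: the off-diagonal case of (\ref{di_laplacian_entry}) is written with its sign flipped relative to the true entry of $\mathbf{L_G L_G^\top}$ (your row--inner-product formula, confirmed by the three-node example, gives the correct sign), and the paper's subsequent inequality for when negative weights appear inherits this flip and so reads in the reversed direction; your version --- negative weight precisely when the common-predecessor coupling $\sum_k w_{ki}w_{kj}$ exceeds the direct-edge terms --- is the correct one, so do not ``fix'' your computation to match the paper's.
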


\begin{proof} 
Each element $(i,j)$ in $\mathbf{L_{G_u}}$ can be written as follows:
\begin{equation}\label{di_laplacian_entry}
{L_{G_u}}_{ij}=\begin{cases}
{D_G}^2_{ii}+\sum_{k}{A_G}_{ki}^2 & \text{$i=j$}\\
\sum_{k}\left(-{A_G}_{ki}{A_G}_{kj}+{A_G}_{ki}{D_G}_{kj}+{D_G}_{ki}{A_G}_{kj}\right) & \text{$i\neq j$}.
\end{cases}
\end{equation}

It can be shown  that the following is always true:
\begin{equation}
\begin{split}
{L_{G_u}}_{ii}+\sum_{j, j \neq i}{L_{G_u}}_{ij}  &=\sum _k {L_G}_{ik}{L_G}_{ik}+\sum_{j, j\neq i}\sum _k {L_G}_{jk}{L_G}_{ik}\\
& =\sum _k {L_G}_{ik}\left({L_G}_{ik}+\sum_{j, j\neq i} {L_G}_{jk}\right)=0,
\end{split}
\end{equation}
which indicates the all-one vector is the null space of $\mathbf{L_{G_u}}$. 
For {directed graphs},  it can be shown that if a node has more than one outgoing edge, in the worst case the neighboring nodes pointed by such outgoing edges will form a clique possibly with negative edge weights in the corresponding undirected graph after symmetrization.  

As an example shown in Figure \ref{fig:symetrization}, when edge $e2$ is added into the initial graph $G$ that includes a single edge $e1$,  an extra edge (shown in red dashed line) coupling with $e1$ will be created in the resultant undirected graph $G_u$; similarly, when an edge $e3$ is further added,  two extra edges coupling with $e1$ and $e2$ will be created in $G_u$. When the last edge $e4$ is added, It forms a clique. 

It can be shown that ${G_u}$ will contain negative edge weights under the following condition:
\begin{equation}
\sum_{k}({A_G}_{ki}{D_G}_{kj}+{D_G}_{ik}{A_G}_{jk})>\sum_{k}{A_G}_{ki}{A_G}_{kj}.
\end{equation}
In some cases, there may exist no  clique even though all outgoing edges according to one node are added into subgraph only because the weights of these edges become zeros satisfying:  $\sum_{k}({A_G}_{ki}{D_G}_{kj}+{D_G}_{ik}{A_G}_{jk})=\sum_{k}{A_G}_{ki}{A_G}_{kj}$.
\end{proof}

\subsection{Existence of linear-sized spectral sparsifier}
It has been shown  that every undirected graph with positive edge weights has a Twice-Ramanujan spectral sparsifier with positive edge weights to spectrally-approximate the original graph \cite{batson2012twice,Lee:2017}. In this work,  we   extend the above theory  to deal with the undirected graphs obtained through the proposed Laplacian symmetrization procedure that may introduce negative weights.

\begin{theorem} \label{twice}
For a given directed graph $ G$ and its undirected graph ${G_u=(V,E_{G_u},w_{G_u})}$ obtained via Laplacian symmetrization, there exists a $(1+\epsilon)$-spectral sparsifier $S$ with $O(n/\epsilon ^2)$ edges such that its undirected graph $S_u=(V,E_{S_u}, {w_{S_u}})$ after symmetrization  satisfies the following condition for any $\mathbf{{x} \in \mathbb{R}^n}$:
\begin{equation}\label{twice_sym}
(1-\epsilon)\mathbf{x^\top L_{G_u}x \le x^\top L_{S_u}x  \le} (1+\epsilon) \mathbf{x^\top L_{G_u} x}. 
 \end{equation}

\end{theorem}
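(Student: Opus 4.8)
The plan is to reduce the statement to a Batson--Spielman--Srivastava (BSS) style theorem for sums of rank-one positive semidefinite matrices, using Theorem~\ref{thm:directed symmetrization} to supply the structural facts about $\mathbf{L_{G_u}}$. First I would record that, by Theorem~\ref{thm:directed symmetrization}, $\mathbf{L_{G_u}}=\mathbf{L_G}\mathbf{L_G^\top}$ is symmetric positive semidefinite with the all-one vector $\mathbf{1}$ as its one-dimensional null space, so the two-sided bound (\ref{twice_sym}) only needs to hold on the $(n-1)$-dimensional subspace orthogonal to $\mathbf{1}$. The key structural observation is the decomposition
\begin{equation}
\mathbf{L_{G_u}}=\mathbf{L_G}\mathbf{L_G^\top}=\sum_{k=1}^{n} \ell_k \ell_k^\top,
\end{equation}
where $\ell_k$ denotes the $k$-th column of $\mathbf{L_G}$, i.e. the signed combination $\ell_k=\sum_{e=(k,i)} w_{ki}(e_k-e_i)$ of the outgoing edges of node $k$. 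Each summand $\ell_k \ell_k^\top$ is rank-one and PSD, and this is exactly the point that bypasses the positive-weight hypothesis of the classical theorem: although the clique created when symmetrizing node $k$'s out-edges (Figure~\ref{fig:symetrization}) may carry negative weights, the block $\ell_k \ell_k^\top$ that it contributes to $\mathbf{L_{G_u}}$ is unconditionally PSD.

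Next I would invoke the BSS/Twice-Ramanujan machinery in its sum-of-PSD-matrices form \cite{batson2012twice,Lee:2017} applied to the family $\{\ell_k \ell_k^\top\}_{k=1}^n$. This produces nonnegative scalars $s_k\ge 0$, only $O(n/\epsilon^2)$ of which are nonzero, such that
\begin{equation}
(1-\epsilon)\,\mathbf{L_{G_u}} \preceq \sum_{k=1}^{n} s_k\, \ell_k \ell_k^\top \preceq (1+\epsilon)\,\mathbf{L_{G_u}}.
\end{equation}
Scaling the outgoing edges of node $k$ by $\sqrt{s_k}$ (and deleting node $k$'s out-edges whenever $s_k=0$) defines a reweighted directed subgraph $S$ whose $k$-th column is $\gamma_k=\sqrt{s_k}\,\ell_k$; by construction $\mathbf{L_{S_u}}=\mathbf{L_S}\mathbf{L_S^\top}=\sum_k s_k\, \ell_k \ell_k^\top$ equals the middle term above. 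Reading off the quadratic form $\mathbf{x}^\top(\cdot)\,\mathbf{x}$ then yields precisely (\ref{twice_sym}).

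The main obstacle, and the step needing the most care, is the mismatch between the \emph{number of PSD blocks} and the \emph{number of directed edges}. The decomposition above has only $n$ blocks, one per source node, and a single block already encodes all of that node's out-edges; hence BSS applied verbatim at the block level controls the number of active source nodes rather than the number of edges, which can still be large at high-out-degree nodes. At this level the classical positive-weight obstruction genuinely disappears (each update $\ell_k \ell_k^\top$ is PSD), but to reach the claimed $O(n/\epsilon^2)$-edge count I would refine the argument in one of two ways: (i) precede the block-level BSS with a cheap within-node reduction that replaces each heavy column $\ell_k$ by a bounded-support vector, or (ii) pass to a finer decomposition indexed by directed edges, in which the indefinite cross-couplings $b_e b_f^\top$ between co-origin edges $e,f$ reappear and must be regrouped into PSD pieces. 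Verifying that the BSS barrier potentials remain monotone under these rank-one updates in the presence of the negative signed-clique entries is where the genuine extension beyond the positive-weight Twice-Ramanujan theorem resides.
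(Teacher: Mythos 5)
Your proposal is essentially the paper's own proof: the paper factors $\mathbf{L_G=B^\top W C}$, sets $\mathbf{W_{o}=W CC^\top W}$, and applies Lemma~\ref{theorem:dual_set} to the columns of $\mathbf{L_{G_u}^{\nicefrac{+}{2}} B^\top W_{o}^{\nicefrac{1}{2}}}$ --- and since $\mathbf{W_o}$ is block-diagonal with one rank-one block $ww^\top$ per source node, those $m$ columns are pairwise parallel within each node group and their outer products sum to exactly your node-indexed blocks $\mathbf{L_{G_u}^{\nicefrac{+}{2}}}\ell_k^{}\ell_k^\top\mathbf{L_{G_u}^{\nicefrac{+\top}{2}}}$, so the two decompositions coincide up to this regrouping. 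Two remarks: the ``main obstacle'' you identify is genuine, and the paper resolves it exactly by concession rather than by either of your proposed refinements --- its proof admits that each selected vector corresponds to \emph{all} outgoing edges of one node and claims $O(n/\epsilon^2)$ edges only ``for directed graphs with bounded degrees'' --- whereas your closing worry about extending the barrier argument is unnecessary, since the BSS/Lee--Sun machinery applies verbatim to any rank-one PSD decomposition of an SPS matrix with no positivity hypothesis on the entries of the symmetrized graph, which is precisely the observation your proof (and the paper's) already exploits.
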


We will need the following   the lemma [\ref{theorem:dual_set}] to prove our theorem. 
\begin{lemma}\label{theorem:dual_set}
Let $\epsilon>0$, and  $\mathbf{u_1,u_2,...,u_{m}}$ denote a set of vectors in $\mathbb{R}^{n}$ that allow expressing the identity decomposition as: 
\begin{equation}
\mathbf{\sum_{1\leq i\leq m}u_i u_i^\top=I_{n\times n}},
\end{equation}
\end{lemma}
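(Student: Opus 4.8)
The plan is to prove this by the deterministic barrier-function (potential-function) method of Batson, Spielman, and Srivastava \cite{batson2012twice}, since the hypothesis $\sum_i u_i u_i^\top = I_n$ is exactly the normalized form that method requires. I would construct the weights $s_i$ greedily, adding one scaled rank-one term $s\,u_i u_i^\top$ at a time and keeping the spectrum of the partial sum $A_t$ (with $A_0 = \mathbf{0}$) trapped strictly between a moving lower barrier $\ell$ and a moving upper barrier $u$. The two quantities I would track are the barrier potentials
\[
\Phi^{u}(A) = \operatorname{Tr}\!\left[(uI - A)^{-1}\right], \qquad \Phi_{\ell}(A) = \operatorname{Tr}\!\left[(A - \ell I)^{-1}\right],
\]
each of which diverges as an eigenvalue of $A$ approaches the corresponding barrier. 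Starting from $\ell_0 < 0 < u_0$, the aim is to add one vector and advance the barriers by fixed increments $u \mapsto u + \delta_U$, $\ell \mapsto \ell + \delta_L$ at every step so that neither potential ever increases; a finite potential then certifies that all eigenvalues stay inside $(\ell, u)$, so the ratio $u/\ell$ will control the final relative condition number.

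The per-step engine is the Sherman--Morrison formula, which gives the exact effect of a rank-one update. After shifting the upper barrier to $u' = u + \delta_U$,
\[
\Phi^{u'}(A + s\,vv^\top) = \Phi^{u'}(A) - \frac{s\, v^\top (u'I - A)^{-2} v}{1 + s\, v^\top (u'I - A)^{-1} v},
\]
and an analogous identity holds at $\ell' = \ell + \delta_L$ (with the inequality direction reversed, since adding a positive semidefinite term pushes eigenvalues \emph{away} from the lower barrier). Demanding $\Phi^{u'}(A + s\,vv^\top) \le \Phi^{u}(A)$ yields an upper bound $s \le 1/U_A(v)$ for an explicit ``upper estimator'' $U_A(v)$, while demanding the lower potential not to increase yields a lower bound $s \ge 1/L_A(v)$ for a ``lower estimator'' $L_A(v)$. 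Hence a usable weight for index $i$ exists exactly when $L_A(u_i) \ge U_A(u_i)$, and then any $s \in [\,1/L_A(u_i),\,1/U_A(u_i)\,]$ works.

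The crux is guaranteeing such an index at \emph{every} step, and this is where the identity decomposition is used. Summing the estimators over all $i$ and invoking $\sum_i u_i u_i^\top = I_n$ collapses each rank-one form into a trace, e.g. $\sum_i u_i^\top (u'I - A)^{-2} u_i = \operatorname{Tr}[(u'I - A)^{-2}]$, turning the per-vector estimators into closed-form expressions in the current potentials. With the increments calibrated so that the upper side is bounded by $\delta_U^{-1}$ plus the initial upper potential and the lower side by $\delta_L^{-1}$ minus the initial lower potential, one shows inductively that $\sum_i U_A(u_i) \le \sum_i L_A(u_i)$, which forces the existence of an index with $L_A(u_i) \ge U_A(u_i)$. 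Choosing the BSS calibration (fixed $\delta_U,\delta_L$ across all steps, with $u_0$ and $|\ell_0|$ of order $n/\epsilon$) makes the barriers separate at a controlled rate, so that after $\lceil n/\epsilon^2 \rceil$ steps only that many weights are nonzero while $u/\ell \le (1+\epsilon)^2/(1-\epsilon)^2$; rescaling $\sum_i s_i u_i u_i^\top$ then delivers the two-sided spectral bound.

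The main obstacle I expect is precisely the joint calibration: the upper and lower potential inequalities must be maintainable for the \emph{same} increments simultaneously throughout all iterations, so $\delta_U$ and $\delta_L$ have to be chosen to advance the barriers at compatible speeds while keeping the averaged estimator inequality $\sum_i U_A(u_i) \le \sum_i L_A(u_i)$ valid at each step; verifying the Sherman--Morrison bounds and the monotonicity of both potentials under the combined shift-and-update is the delicate algebraic heart of the argument. A secondary point worth noting is that the lemma is stated purely in terms of vectors with $\sum_i u_i u_i^\top = I_n$ and never references edge signs, so the argument is indifferent to whether the underlying symmetrized graph carries negative weights -- which is exactly what lets it extend the Twice-Ramanujan construction to the matrices $\mathbf{L_{G_u}}$ needed for Theorem~\ref{twice}.
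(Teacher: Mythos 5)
The paper does not prove this lemma at all: it is imported as a known result, with the conclusion (non-negative coefficients $t_i$, at most $O(n/\epsilon^2)$ of them nonzero, satisfying the two-sided bound in (\ref{twice_sparsifier})) attributed to the algorithm of \cite{Lee:2017}. Your proposal to establish it from scratch via the Batson--Spielman--Srivastava barrier method \cite{batson2012twice} is therefore a genuinely different, self-contained route, and in outline it is the right one: the greedy rank-one accumulation, the two potentials $\Phi^{u}$ and $\Phi_{\ell}$, the per-step estimators $U_A(v)$ and $L_A(v)$, and the averaging step that uses $\sum_i \mathbf{u_i u_i^\top} = \mathbf{I}$ to collapse the estimator sums into traces and force an index with $L_A(u_i) \ge U_A(u_i)$ is exactly the standard existence proof, and with the usual calibration it yields $O(n/\epsilon^2)$ nonzero weights and a $(1\pm\epsilon)$ bound after rescaling. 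Your closing remark that the argument never sees edge signs is also correct, and it is precisely why the lemma can be applied to the symmetrized Laplacians with negative weights in Theorem \ref{twice}.

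Two caveats. First, a sign slip: the rank-one update you wrote for the upper barrier is actually the lower-barrier identity. Adding a PSD term pushes eigenvalues \emph{toward} the upper barrier, so the correct formula is
\begin{equation*}
\Phi^{u'}(A + s\,vv^\top) \;=\; \Phi^{u'}(A) \;+\; \frac{s\, v^\top (u'I - A)^{-2} v}{1 - s\, v^\top (u'I - A)^{-1} v},
\end{equation*}
while the version with the leading minus sign and the $1 + s\,(\cdot)$ denominator belongs to $\Phi_{\ell'}$; the conditions $s \le 1/U_A(v)$ and $s \ge 1/L_A(v)$ that you then state are the standard correct ones, so this is a repairable transcription error rather than a conceptual one. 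Second, and more substantively: the lemma as stated in the paper also asserts an $O(m/\epsilon^{O(1)})$-time algorithm. The greedy procedure you describe must evaluate the estimators for all $m$ vectors at each of $\Theta(n/\epsilon^2)$ steps, each evaluation involving inverses of shifted $n \times n$ matrices, so it runs in polynomial --- not almost-linear --- time; the nearly-linear runtime is exactly what the machinery of \cite{Lee:2017} provides and is the reason the paper cites that work. Your argument therefore proves the existential and sparsity content that the paper's proof of Theorem \ref{twice} actually consumes, but not the runtime claim embedded in the lemma's statement.
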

where $ \mathbf{I_{n\times n} \in \mathbb{R}^{{n\times n}}}$ denotes an identity matrix. Then there exists a $O(m/\epsilon^{(1)})$-time algorithm \cite{Lee:2017} that can find non-negative coefficients $\{t_i\}^m_{i=1}$  such that at most $|\{t_i|t_i> 0\}|=O(n/\epsilon ^2)$ and  for any $\mathbf{x} \in \mathbb{R}^n$: 
\begin{equation}\label{twice_sparsifier}
(1-\epsilon)\mathbf{ x^\top I_{n\times n} x\leq \sum_{i} t_i x^\top u_i u_i^\top x \leq}(1+\epsilon) \mathbf{ x^\top I_{n\times n} x}.
\end{equation}

\begin{proof}
Any directed graph Laplacian can  also be written  as:
\begin{equation}\label{formula_laplacian2}
\mathbf{L_G=B^\top W C}, 
\end{equation}
where $\mathbf{B}_{m\times n}$  and $\mathbf{C}_{m\times n}$ are the  edge-vertex incidence matrix and  the injection matrix defined below:
\begin{equation}
B(i,v)=\begin{cases}
1 & \text{ if } v \text{ is  i-th edge's    head}\\
-1 & \text{ if } v \text{ is   i-th edge's tail}, 
\end{cases}
\end{equation}
\begin{equation}
C(i,v)=\begin{cases}
1 & \text{ if } v \text{ is  i-th edge's    head}\\
0 & \text{ if } v \text{ is  i-th edge's tail}, 
\end{cases}
\end{equation}
and $\mathbf{W_{m\times m}}$ is the diagonal matrix with $W(i,i)=w_i$.
  We show how to construct the  vectors $\mathbf{u_i}$ for $i=1,...,m$ in (\ref{twice_sparsifier}), which will suffice for proving the existence of linear-sized spectral sparsifiers for directed graphs. (\ref{formula_laplacian2}) allows writing  the undirected Laplacian after symmetrization as $\mathbf{L_{G_u}=B^\top W CC^\top WB}$ and setting $\mathbf{W_{o}=W CC^\top W}$. Since $\mathbf{W_{o}}$ is an  SPS matrix, we can always construct a $\mathbf{U}$ matrix with $\mathbf{u_i}$ for $i=1, ..., m$ as its column vectors:
 \begin{equation}
 \mathbf{
U_{n\times m}=\mathbf{[u_1,...,u_m]}={L_{G_u}^{\nicefrac{+}{2}}}B ^\top W_{o}^{\nicefrac{1}{2}}}.
 \end{equation}
 $\mathbf{U}$ contains all the information of the directed edges in $G$. It can be shown that $\mathbf{U}$ satisfies the following equation: 

\begin{equation}
\begin{split}
\mathbf{
UU^\top} & =\mathbf{\sum_{i} u_i {u_i}^\top}=\mathbf{{L_{G_u}}^{\nicefrac{+}{2}} B^\top W_{o} B  {L_{G_u}}^{\nicefrac{+\top}{2}}}\\
& \mathbf{={L_{G_u}}^{\nicefrac{+}{2}} L_{G_u}{L_{G_u}}^{\nicefrac{+\top}{2}}}\mathbf{={
\begin{bmatrix}
    I_{r\times r}    \\
         & 0_{(n-r)\times(n-r)}
\end{bmatrix}}},
\end{split}
\end{equation}
where $r$ is the rank of the $\mathbf{L_{G_u}}$.
According to Lemma \ref{theorem:dual_set}, we can always construct a diagonal matrix $\mathbf{T \in \mathbb{R}^{{m\times m}}}$ with $t_i$ as its $i$-th diagonal element. Then there will be at most $O(n/\epsilon ^2)$  positive diagonal elements in $\mathbf{T}$, which allows constructing $\mathbf{L_{S_u}=B^\top W_o^{\nicefrac{1}{2} }T W_o^{\nicefrac{1}{2} }B}$ that corresponds to the directed subgraph $S$ for  achieving $(1+\epsilon)$-spectral approximation of $G$ as required by (\ref{twice_sym}). It can be shown  that each $\mathbf{u_i}$ with a nonzero $t_i$ coefficient  corresponds to the outgoing edges pointed by the same node. Consequently, for directed graphs with bounded degrees, there will be $O(n/\epsilon ^2)$  total number of directed edges in the $(1+\epsilon)$-spectral sparsifier $S$.
\end{proof}
\section{A Practically-Efficient  Framework} \label{sec:practical}
Although we have shown that every directed graph has linear-sized spectral sparsifiers, there is no practically-efficient algorithm for constructing such sparsifiers. In this work, we exploit recent spanning-tree based spectral sparsification frameworks that have been developed for undirected graphs, and propose a practically-efficient algorithm  for spectral sparsification of   directed graphs.
\subsection{ Initial subgraph sparsifier}
 The idea of using  subgraphs as  preconditioners for more efficiently solving linear system of equations has been first introduced in \cite{vaidya1991solving},  showing that a maximum-spanning-tree (MST) subgraph can be leveraged as an $mn$-proxy of the original undirected graph. Recent nearly-linear time spectral sparsification algorithms for undirected graphs exploit similar ideas based on low-stretch spanning tree subgraphs \cite{zhuo:dac16,zhuo:dac18,elkin2008lst}. In this work, we exploit ideas that are closely related to spanning-tree based subgraphs as well as the Markov chains of random walks. The following procedure for constructing the initial subgraph sparsifiers for directed graphs has been developed: 
\begin{enumerate}
	\item Compute the transition matrix $\mathbf{P_{G_{sym}^{}} =  D_{G_{sym}}^{-1}A_{G_{sym}}}$ from $\mathbf{A_{G_{sym}}}=\mathbf{A_G}+{\mathbf{A^\top_G}}$, where, $\mathbf{L_{G_{sym}}}$ and  $\mathbf{D_{G_{sym}}}$ are Laplacian matrix and diagonal matrix for graph $G_{sym}$ respectively; 
    \item Construct an undirected graph $G_{sym}^{'}$ with $\mathbf{P_{G_{sym}^{}}}$ as its adjacency matrix, and find an MST subgraph $S^{mst}$ of $G_{sym}^{'}$.
    \item  Construct a directed subgraph $S^{0}$ according to the $S^{mst}$, and check every node in $S^{0}$: for each of the nodes that have at least one outgoing edge in $G$ but none in  $S^{0}$, include the outgoing edge  with the largest weight into $S^{0}$.
    \item Return the latest subgraph $S$ as the initial spectral sparsifier.
\end{enumerate}

Step 3)  will make sure that the graph Laplacians of directed graphs $G$ and  $S^{0}$ share the same rank and nullity. As aforementioned, if a node has more than one outgoing edge, in the worst case the neighboring nodes pointed by such outgoing edges will form a clique   in the corresponding undirected graph after symmetrization.  Consequently, when constructing the initial subgraphs from the original directed graphs, it is important to limit the number of outgoing edges for each node so that the resultant undirected graph after  Laplacian symmetrization will not be too dense. To this end,  emerging graph transformation techniques that allow   splitting high-degree nodes into multiple low-degree ones can be exploited. For example, recent research shows such split (e.g. uniform-degree tree) transformations can dramatically reduce graph irregularity while preserving critical graph connectivity, distance between node pairs, the minimal edge weight in the path, as well as outdegrees and indegrees when using push-based and pull-based vertex-centric programming \cite{nodehi2018tigr}.

 \subsection{Edge spectral sensitivity}
Denote the descending   eigenvalues and eigenvectors of $\mathbf{L^+_{S_u}}  \mathbf{L_{G_u}}$ by $\mu_{\max}={\mu _1} \ge {\mu _2} \ge \cdots \ge {\mu _n} > 0$ and $\mathbf{v_1, v_2,..., v_n}$, respectively.  In addition, let matrix  $\mathbf{V=[v_1, v_2,.. v_n]}$. Consider the following first-order generalized eigenvalue perturbation problem:
\begin{equation}\label{eqn:perturbation}
\mathbf{L_{G_u}(v_i+\delta v_i)=(\mu _i+\delta \mu _i)(L_{S_u}+\delta L_{S_u})(v_i+\delta v_i)},
\end{equation}
where a small perturbation $\delta\mathbf{L_{S_u}}$ in $\mathbf{S_u}$ is introduced and subsequently leads to the perturbed generalized eigenvalues and eigenvectors  $\mu _i+\delta \mu _i$ and $\mathbf{v_i+\delta v_i}$. Then the task of {spectral sparsification of general (un)directed  graphs} can be formulated as follows:  recover as few as possible  extra edges back to the initial   subgraph $S$ such that the largest  eigenvalues or the condition number of $\mathbf{L^+_{S_u}}  \mathbf{L_{G_u}}$ can be dramatically reduced. Expanding (\ref{eqn:perturbation}) by only keeping   the first-order terms simply leads to:
\begin{equation}\label{eqn:perturbation1}
\mathbf{L_{G_u}\delta v_i=\mu_i L_{S_u}\delta v_i +\delta \mu_i L_{S_u} v_i+\mu_i\delta L_{S_u} v_i,~ \delta v_i=\sum_{j=1}^{n}\xi_{ij}v_j}.
\end{equation}
Since both $\mathbf{L_{G_u}}$ and $\mathbf{L_{S_u}}$ are SPS matrices, $\mathbf{L_{S_u}}$-orthogonal generalized eigenvectors  $\mathbf{v_i}$ for $i=1,...,n$ can be found to satisfy:
\begin{equation} \label{eqn:LS orthogonal}
 \mathbf{v_i^\top L_{S_u} v_j}=\begin{cases}
    1, & \text{$i=j$}\\
    0, & \text{$i\neq j$}.
  \end{cases}
\end{equation}
By expanding $\mathbf{\delta L_{S_u}}$ with only the first-order terms,
the spectral perturbation for each off-subgraph edge can be expressed as:
\begin{equation}\label{perturb_digraph}
\frac{\delta {\mu _i}}{{\mu _i}} =-\mathbf{v_i^\top \delta L_{S_u} v_i}=-\mathbf{v_i^\top(\delta L_S L_S^\top+ L_S \delta  L_S^\top) v_i},
\end{equation}
where $\mathbf{\delta L_S}= \mathbf{{{w_{p,q}}e_{p,q}}\mathbf{e_p}^\top}$ for ${(p,q)\in E_G\setminus E_S}$, $\mathbf{e_p}\in \mathbb{R}^n$ denotes the vector with only the $p$-th element being $1$ and others being $0$, and  $\mathbf{e_{p,q}}=\mathbf{e_p}-\mathbf{e_q}$.
The \textbf{spectral sensitivity} $\delta {\mu s}_{p,q}$ for the off-subgraph edge $(p,q)$ can be computed by:
\begin{equation}\label{eqn:spec sensitity}
\delta {\mu s}_{p,q} =\mathbf{v_1^\top \delta L_{S_u} v_1}.
\end{equation}
  (\ref{eqn:spec sensitity}) allows computing the {spectral sensitivity} of the dominant generalized eigenvalue  with respect to the Laplacian perturbation due to adding each extra off-subgraph edge into  $S$, which thus can be leveraged to rank the spectral importance of each edge. As a result, spectral sparsification of general (un)directed graphs can be achieved by only recovering the top few   off-subgraph edges that have the  largest spectral sensitivities  into  $S$. 

Since the above framework is based on spectral matrix perturbation analysis, compared to existing spectral   sparsification methods that are limited to specific types of graphs, such as undirected graphs or strongly-connected directed graphs \cite{cohen2017almost,cohen:focs18}, the proposed  spectral graph sparsification framework is more universal and  thus will be applicable to a much broader range   of graph problems. 

\subsection{Approximate dominant eigenvectors  }
 A generalized power iteration   method is proposed  to allow much faster computation of the dominant generalized eigenvectors for spectral   sparsification of directed graphs.  Starting from any initial random vector expressed as $\mathbf{ h_0=\sum_{i}\alpha_iv_i}$, the  dominant generalized eigenvector  $\mathbf{v_1}$ can be approximately computed  by performing the following $t$-step power iterations:
\begin{equation} \label{eqn:ht}
\mathbf{v_1\approx h_t=(L_{S_u}^{+}L_{G_u})^th_0}=\sum_{i} \alpha_i\mu_i^t \mathbf{v_i}.
\end{equation}
When the number of power iterations is small (e.g., $t\le3$),  $\mathbf{h_t}$ will be a linear combination of the first few dominant generalized eigenvectors corresponding to the  largest few eigenvalues.
Then the  { spectral sensitivity}   for the off-subgraph edge $(p,q)$ can be approximately computed by
\begin{equation} \label{eqn:Q}
\begin{split}
&\delta {\mu s}_{p,q}  \approx \mathbf{h_t^\top \delta L^{}_{S_u} h_t}
\end{split},
\end{equation}
which will allow us to well approximate the spectral sensitivity in (\ref{perturb_digraph}) for ranking off-subgraph edges during spectral sparsification. The key to fast computation of $\mathbf{h_t}$ using generalized power iterations is to quickly solve the linear system of equations $\mathbf{L_{S_u} x=b}$, which   requires to explicitly construct   $\mathbf{L_{S_u}}$  rather than $\mathbf{L_{G_u}}$.   To this end, we leverage the latest Lean Algebraic Multigrid (LAMG) algorithm that is capable of handling the undirected graphs with negative edge weights as long as the Laplacian matrix is SPS.  The LAMG algorithm also enjoys an empirical $O(m)$ runtime   complexity for solving large scale graph Laplacian matrices \cite{livne2012lean}.

 \subsection{Lean algebraic multigrid (LAMG)}
 The setup phase of LAMG contains two main steps:
First, a nodal elimination procedure is performed to eliminate disconnected and low-degree nodes. Next, a node aggregation procedure is applied for aggregating strongly connected nodes   according to the following   affinity metric $c_{uv}$    for nodes $u$ and $v$:
 \begin{equation}\label{eqn:agg}
  c_{uv} = \frac{{\|(X_u, X_v)\|}^2}{(X_u, X_u)(X_v, X_v)},\;\;\;   (X,Y) = \Sigma_{k=1}^{K}{x^{(k)}  \cdot y^{(k)}}
\end{equation}
where $X_u = (x_{u}^{(1)}\dots x{_u}{^{(K)}})$ is computed by applying a few Gauss-Seidel (GS) relaxations using $K$ initial random vectors to the linear system equation $\mathbf{L_{G_u} x}=0$. Let $\mathbf{\tilde{x}}$ represent the approximation of the true solution $\mathbf{x}$ after applying several GS relaxations to $\mathbf{L_{G_u} x}=0$. Due to the smoothing property of GS relaxation, the latest error can be expressed as $\mathbf{e_s \,=\,x-\tilde{x}}$, which will only contain the smooth components of the initial error, while the highly oscillating modes   will be effectively damped out \cite{briggs2000multigrid}.  It has been shown that the node affinity metric $c_{uv}$ can   effectively reflect the distance or strength of connection between nodes in a graph: a larger $c_{uv}$ value indicates a stronger connection between nodes $u$ and $v$ \cite{livne2012lean}. Therefore, nodes $u$ and   $v$ are considered strongly connected   to each other if 
$\mathbf{x_u}$ and $\mathbf{x_v}$ are highly correlated for all the $K$ test vectors, which thus should  be aggregated to form a coarse level node. 

Once the multilevel hierarchical representations of the original graph (Laplacians) have been created according to the above scheme, algebraic  multigrid (AMG) solvers can be built and subsequently leveraged to solve large Laplacian matrices efficiently.

\subsection{Edge spectral similarities}
The proposed spectral sparsification algorithm will first sort all off-subgraph edges according to their spectral sensitivities in  descending order $({p_1,q_1}),({p_2,q_2}),...$ and then select top few off-subgraph edges to be recovered to the initial subgraph.  To avoid recovering  redundant edges into the subgraph, it is indispensable to check the edge spectral similarities: only the edges that are not similar to each other will be added to the initial sparsifier.
 To this end, we exploit the following spectral embedding of off-subgraph edges  using   approximate dominant generalized eigenvectors $\mathbf{h_t}$ computed by (\ref{eqn:ht}): 
 \begin{equation}\label{eqn:psi}
 \psi_{p,q}(h_t)=\sum_{k} w_{p,q_k} \mathbf{h_t^\top(e_{p,q}e_{p,q_k}^\top +e_{p,q_k}e_{p,q}^\top)h_t},
 \end{equation}
where  $(p,q_k)$ are the directed edges  sharing the same head with $(p,q)$ but different tails. Then the proposed scheme for checking spectral similarity of two off-subgraph edges will include the following steps:
\begin{enumerate}
 \item Perform $t$-step power iterations with $r=O(\log n)$ initial random vectors  $\mathbf{h^{(1)}_0,...,h^{(r)}_0}$  to compute $r$ approximate dominant generalized eigenvectors  $\mathbf{h^{(1)}_t,...,h^{(r)}_t}$; 
 \item For each edge $(p,q)$, compute   a $r$-dimensional spectral embedding vector $\mathbf{s_{p,q}}  \in \mathbb{R}^r$ with $s_{p,q}(r)=\psi_{p,q}(h_t^{(r)})$;
 \item Check the spectral similarity of two off-subgraph edges $(p_i,q_i)$ and $(p_j,q_j)$ by computing
 \begin{equation}
\textrm{SpectralSim} =1-\frac{||\mathbf{s_{p_i,q_i}}-\mathbf{s_{p_j,q_j}}||}{\max(||\mathbf{s_{p_i,q_i}}||_2,||\mathbf{s_{p_j,q_j}}||)}.
\end{equation}
 \end{enumerate}
 
 If $\textrm{SpectralSim} < \epsilon$ for a given $\epsilon$,   edge$(p_i,q_i)$  is considered spectrally dissimilar with $(P_j,q_j)$. Algorithm flow Algorithm \ref{alg:edge_similarity} is presented to show the edge similarity checking for a list of off-subgraph edges.
 \begin{algorithm}[!htbp]
\small { \caption{\textrm{Edge Similarities Checking}} \label{alg:edge_similarity}
\textbf{Input:} $E_{\textrm{list}}$, $\mathbf{L}_{G}$, $\mathbf{L}_{S}$,  $d_{{\textrm{out}}}$,  $\epsilon$ \\
    \begin{algorithmic}[1]
    \STATE{Perform t-step power iterations with $r=O(\log n)$ initial random vectors  $\mathbf{h^{(1)}_0,...,h^{(r)}_0}$  to compute $r$ approximate dominant generalized eigenvectors  $\mathbf{h^{(1)}_t,...,h^{(r)}_t}$};
     \STATE{Choose each edge $(p,q)$  whose starting node has out-degree less than $d_{\textrm{out}}$ into a new $E_\textrm{list}$};
     \STATE{Compute  a $r$-dimensional  edge similarity vector $\mathbf{s_{p,q}}  \in \mathbb{R}^r$ for $\forall (p,q)\in E_\textrm{list}$: $s_{p,q}(r)=\psi_{p,q}(h_t^{(r)})$ };
     \STATE{let $E_{\textrm{addlist}}=[(p_1,q_1)]$};
     \FOR{i=1:$E_{\textrm{list}}$}
  
     \IF{$1-\frac{||\mathbf{s_{p_i,q_i}}-\mathbf{s_{p_j,q_j}}||}{\max(||\mathbf{s_{p_i,q_i}}||,||\mathbf{s_{p_j,q_j}}||)}<\epsilon$, for $ \forall    (p_j,q_j)\in E_{\textrm{addlist}}$}
    \STATE{$E_{\textrm{addlist}}=[E_{\textrm{addlist}};(p_i,q_i)]$};
    \ENDIF

    \ENDFOR
     \STATE {Return graph $E_{\textrm{addlist}}$ };
    \end{algorithmic}
    }
    
\end{algorithm}

\subsection{Algorithm flow and complexity}
The algorithm flow for directed graph spectral sparsification is described in Algorithm \ref{alg:directed_graph_spar}, while its complexity has been summarized as follows:
 \begin{enumerate}
 \item[\textbf{(a)}] Generate an initial subgraph $S$ from the original directed graph in $O(m \log n)$ or $O(m + n \log n)$ time;
 \item[\textbf{(b)}] Compute the approximate dominant eigenvector $\mathbf{h_t}$  and  the spectral sensitivity of each off-subgraph edge in $O(m)$ time;
  \item[\textbf{(c)}] Recover a small amount of spectrally-dissimilar  off-subgraph edges into the latest subgraph $S$ according to their spectral sensitivities and similarities in $O(m)$ time; 
 \item[\textbf{(d)}] Repeat steps \textbf{(b)} and \textbf{(c)} until  the desired condition number or spectral similarity is  achieved. 
 \end{enumerate}

\begin{algorithm}[!htbp]
\small { \caption{Algorithm Flow for Directed Graph Sparsification} \label{alg:directed_graph_spar}
\textbf{Input:}  $\mathbf{L}_{G}$, $\mathbf{L}_{S}$,  $d_{{\textrm{out}}}$, $\textrm{iter}_{\textrm{max}}$, $\mu_{\textrm{limit}}$, $\alpha$, $\epsilon$ \\
    \begin{algorithmic}[1]
    
     \STATE{Calculate largest generalized eigenvector $h_t$, largest generalized eigenvalue $\mu_{\textrm{max}}$ and let $\mu_{\textrm{max}}=\mu_{\textrm{in}}$, $ \textrm{iter}=1$};
    \WHILE{ $ \mu_{\textrm{max}}<\mu_{\textrm{limit}}$, $\textrm{iter}<{\textrm{iter}}_{\textrm{max}}$}
   
    \STATE{  Calculate the spectral sensitivities $\delta {\mu s}_{p,q}$ for each off-subgraph edges $(p,q)\in E_{G\backslash S}$};
    \STATE{Sort spectral sensitivities in descending order and obtain  the top $\alpha\%$ off-subgraph edges into  edge list $E_{\textrm{list}}=[(p_1,q_1),(p_2,q_2),...]$};
    
    \STATE{ Do $E_{\textrm{addlist}}=\textrm{EdgeSimilaritiesChecking}(E_{\textrm{list}},L_G,L_S,d_{\textrm{out}}\epsilon)$ };
    \STATE{Update $S_{\textrm{new}}=S+e_{\textrm{addlist}}$ and calculate largest generalized eigenvector $h_{t\textrm{new}}$, largest generalized eigenvalue $\mu_{\textrm{maxnew}}$ based on $L_G$ and $L_{S\textrm{new}}$ };
    \IF{$\mu_{\textrm{maxnew}}<\mu_{\textrm{max}}$}
    \STATE{Update $S=S_{\textrm{new}}$, $h_t=h_{t\textrm{new}}$, $\mu_{\textrm{max}}=\mu_{\textrm{maxnew}}$ };
    \ENDIF

    \STATE{$\textrm{iter}=\textrm{iter}+1$};
    \ENDWHILE
     \STATE {Return graph $S$ and $\mathbf{L}_S$, $\mu_{\textrm{max}}$, $\frac{|E_S|}{|E_G|}$, $\frac{\mu_{\textrm{in}}}{\mu_{\textrm{max}}}$}.
    \end{algorithmic}
    }
    
\end{algorithm}

  \section{APPLICATIONS of Directed Graph Sparsification}\label{sec:application}
  Spectral graph sparsification algorithms can be potentially applied to accelerate many graph  and numerical algorithms \cite{cohen2017almost}. In this work, we demonstrate the applications of the proposed  sparsification algorithm in  solving directed Laplacian problems (e.g., $Lx=b$),   \cite{cohen2016faster,cohen2017almost}, computing personalized PageRank vectors \cite{cohen:focs18}, as well as spectral graph partitioning. 

  \subsection{PageRank and personalized PageRank}
  The idea of PageRank is to give a measurement of the importance for each web page. For example, PageRank algorithm aims to find the most popular web pages, while the personalized PageRank algorithm aims to find the pages that users will most likely to visit. To  state it mathematically, the PageRank vector $\mathbf{p}$ satisfies the following equation:
 \begin{equation}\mathbf{
    \mathbf{p}=A_G^\top D_G^{-1}\mathbf{p}},
 \end{equation}
  where $\mathbf{p}$ is also the eigenvector of $\mathbf{A_G ^\top D_G^{-1}}$ that corresponds to the eigenvalue equal to $\mathbf{1}$. Meanwhile, $\mathbf{p}$ represents the stable distribution of random walks on graph $G$. However, $\mathbf{D_G^{-1}}$ can not be defined if there exists nodes that have no outgoing edges.  To deal with such situation,  a self-loop  with a  small edge weight can be added  for each node. 
  
  The stable distributions of (un)directed  graphs may not be unique. For example, the undirected graphs that have multiple  strongly-connected components, or  the directed graphs that have    nodes  without any outgoing edges, may have non-unique distributions.  In addition, it may take very long time for a random walk to converge to a stable distribution on a given (un)directed graph.
  
  To avoid such situation in PageRank, a jumping factor $\alpha$ that describes the  possibility at $\alpha$ to jump to a uniform vector can be added, which is shown as follows: 
  \begin{equation}\mathbf{
      {p}=(1-\alpha) A_G^\top D_G^{-1}{p} + \frac{\alpha}{n}{1}},
  \end{equation}
  
 \begin{equation}\mathbf{
     {p}= \frac{\alpha}{n}(I-(1-\alpha) {A_G}^\top D_G^{-1})^{-1}{1} },
  \end{equation}
  where $\alpha \in [0, 1]$ is a jumping constant. After applying Taylor expansions,  we can obtain that $\mathbf{{p}= \frac{\alpha}{n}\sum_{i}{((1-\alpha) A_G^\top D_G^{-1})^i}}$ . By setting the proper value of $\alpha$ (e.g., $\alpha=0.15$), the term $(1-\alpha)^i$ will be quickly reduced with increasing $i$.  Instead of starting with a uniform vector$\frac{\alpha}{n}$, a  nonuniform personalization vector $\mathbf{p}r$ can be applied:
  \begin{equation}\mathbf{
      {p}=(1-\alpha) A_G^\top D_G^{-1}{p} +\alpha pr}.
  \end{equation}
  In this work,  we show that the PageRank vector obtained with the sparsified graph can preserve the  original PageRank information. After obtaining the PageRank vector computed using the sparsifier, a few GS relaxations  will be applied to further improve the solution quality.

  \subsection{Directed Laplacian  solver}
  Consider  the solution of the following linear systems of equations:
  \begin{equation}\label{eqn:linear system}
    \mathbf{  L x=b}.
  \end{equation}
  Recent research has been focused on more efficiently solving the above problem when $\mathbf{L}$ is a Laplacian  matrix of an undirected graph \cite{kelner2014almost,koutis2010approaching}. In this work, we will mainly focus on solving  nonsymmetric Laplacian matrices that correspond to directed graphs.
  \begin{lemma}\label{lemma:directed solver}
  When solving   (\ref{eqn:linear system}), the right preconditioning system is applied, leading to the following alternative linear system of equations:
  \begin{equation}\label{eqn:directed linear}
     \mathbf{ L_{G_u} y=b},
  \end{equation}
  where vector $\mathbf{b}$  will lie in the left singular vector space. When the solution of  (\ref{eqn:directed linear}) is obtained, the solution of    (\ref{eqn:linear system}) is given by $\mathbf{ L_G^{\top}y=x}$.
  
  \end{lemma}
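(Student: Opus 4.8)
The plan is to treat Lemma \ref{lemma:directed solver} as a purely algebraic \emph{right-preconditioning} identity, with the only real care needed in accounting for the rank-deficiency of the directed Laplacian. Writing $\mathbf{L}=\mathbf{L_G}$, I would begin from the target system $\mathbf{L_G x = b}$ and introduce the change of variable $\mathbf{x = L_G^\top y}$, which is exactly the right preconditioner. Substituting yields $\mathbf{L_G L_G^\top y = b}$, and by the definition $\mathbf{L_{G_u}=L_G L_G^\top}$ from Section \ref{sec:maintech} this is precisely $\mathbf{L_{G_u} y = b}$. This step is the heart of the argument and requires no computation; all the substance lies in showing that nothing is lost when passing between the two systems, even though both coefficient matrices are singular.

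Next I would verify the sufficiency direction of the recovery rule: if $\mathbf{y}$ solves the preconditioned system $\mathbf{L_{G_u} y = b}$, then $\mathbf{x = L_G^\top y}$ solves the original one, since $\mathbf{L_G x = L_G L_G^\top y = L_{G_u} y = b}$. This confirms that the map $\mathbf{L_G^\top y = x}$ stated in the lemma indeed returns a genuine solution of $\mathbf{L_G x = b}$.

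The main obstacle, and the point where the SVD set up earlier is used, is the \textbf{consistency} condition. The original system is solvable only when $\mathbf{b}$ lies in $\mathrm{range}(\mathbf{L_G})$, whereas the preconditioned system is solvable only when $\mathbf{b}\in\mathrm{range}(\mathbf{L_G L_G^\top})$; I would close this gap by invoking the standard equality $\mathrm{range}(\mathbf{L_G L_G^\top})=\mathrm{range}(\mathbf{L_G})$. This follows directly from the decomposition $\mathbf{L_G}=\sum_i \sigma_i \mathbf{\zeta_i}\mathbf{\eta_i^\top}$ of Section \ref{sec:maintech}: the left singular vectors $\mathbf{\zeta_i}$ with $\sigma_i>0$ are exactly the eigenvectors of $\mathbf{L_G L_G^\top}=\mathbf{L_{G_u}}$ having positive eigenvalues, and they span both ranges. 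Hence requiring $\mathbf{b}$ to lie in the \emph{left singular vector space} of $\mathbf{L_G}$ renders the two systems simultaneously consistent, which is the precise meaning of the clause in the statement.

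Finally I would observe that the recovered solution is in fact the natural minimum-norm one: composing with the pseudoinverse identity $\mathbf{L_G^+ = L_G^\top (L_G L_G^\top)^+}$ gives $\mathbf{x = L_G^\top y = L_G^\top L_{G_u}^+ b = L_G^+ b}$, so solving the SPS system $\mathbf{L_{G_u} y = b}$ (which, per Section \ref{sec:practical}, LAMG can handle) reproduces exactly the pseudoinverse solution of $\mathbf{L_G x = b}$. I expect the only delicate bookkeeping to be a clean statement of this range equality together with the remark that $\mathbf{y}\mapsto\mathbf{L_G^\top y}$ maps into the orthogonal complement of the null space of $\mathbf{L_G}$, so that the recovered $\mathbf{x}$ is well defined modulo that null space, exactly as one expects for a singular Laplacian.
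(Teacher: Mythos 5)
Your proof is correct and takes essentially the same route as the paper, which treats the lemma as an immediate consequence of the substitution $\mathbf{x=L_G^\top y}$: the paper merely remarks that solving $\mathbf{L_{G_u}y=b}$ is ``obviously'' equivalent to solving $\mathbf{L_GL_G^\top L_G^{+\top}x=b}$, i.e.\ $\mathbf{L_Gx=b}$, and gives no further argument. Your added verifications --- the consistency condition via $\mathrm{range}(\mathbf{L_GL_G^\top})=\mathrm{range}(\mathbf{L_G})$ (which makes precise the clause that $\mathbf{b}$ lies in the left singular vector space) and the minimum-norm recovery $\mathbf{x=L_G^\top L_{G_u}^{+}b=L_G^{+}b}$ --- are correct and strictly more careful than the paper's own treatment.
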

     It is  obvious that solving the above equation is equivalent to  solving the problem of $\mathbf{L_GL_G^\top L_G^{+ \top} x=b}$.  In addition, $\mathbf{L_{G_u}}$ is a Laplacian matrix of an undirected graph that can be much denser than $\mathbf{L_{G}}$.
Therefore, we propose to solve the linear system of $\mathbf{L_{S_u} \tilde{y}=b }$  instead to effectively approximate  (\ref{eqn:directed linear}) since   $G_{S_u}$ is sparser  than   $G_{G_u}$ and more efficient to solve in practice.

We analyze the solution errors based on the generalized eignvalue problem of $\mathbf{L_{G_U}}$ and $\mathbf{L_{S_U}}$. We have $\mathbf{ V L_{G_U}V^\top=\mu}$ and $\mathbf{ V L_{S_U}V^\top=I}$, where $\mathbf{V=[v_1,v_2,..v_n]}$, $\mathbf{\mu}$ is the diagonal matrix with its generalized eigenvalues $\mu_i \ge 1$ on its  diagonal.
Since the errors can be calculated  from the following procedure: 
\begin{equation}
\begin{split}
 & \mathbf{L_{G_U}y-L_{S_U}\tilde{y} =  L_{G_U}(y-\tilde{y})+(L_{G_U}-L_{S_U})\tilde{y} =0}
    \end{split},
\end{equation}
 we can write the error term as follows:
\begin{equation}\mathbf{
    (\tilde{y}-y)=L_{G_U}^+(L_{G_U}-L_{S_U})\tilde{y}}.
\end{equation}
Since $\mathbf{\tilde{y}=\sum_i a_i v_i}$, the error can be further expressed as 
\begin{equation}\label{eqn:error}
    \mathbf{(\tilde{y}-y)=\sum_i a_i(1-\frac{1}{\mu_i})v_i}.
\end{equation}
Therefore, the error term (\ref{eqn:error}) can be generaly considered as a combination of high-frequency errors (generalized eigenvectors with respect to high generalized eigenvalues) and low-frequency errors (generalized eigenvectors with respect to low generalized eigenvalues). After applying  GS relaxations,  the high-frequency error terms can be efficiently removed (smoothed), while the low-frequency errors tend to become zero if the generalized eigenvalues approach $1$ considering  $(1-\frac{1}{\mu_i})$  tends to be approaching zero. As a result, the  error can be effectively eliminated using the above solution smoothing procedure.

 In summary, in the proposed directed Laplacian solver, the following steps are needed: 
\begin{itemize}
    \item[\textbf{(a)}]   We will first extract a spectral sparsifier  $\mathbf{L_S}$ of a given (un)directed graph $\mathbf{L_G}$. Then, it is possible to compute an approximate solution   by exploiting  its spectral sparsifier $\mathbf{L_{S_u}=L_S L_S^\top}$ via solving $\mathbf{\tilde{y}=L_{S_u}^+b}$ instead. 
    
    \item[\textbf{(b)}] Then we improve the approximate solution  $\mathbf{\tilde{y}}$ by getting rid of the high-frequency errors via applying a few steps of GS iterations \cite{multigrid:book}. 
    \item[\textbf{(C)}] The final solution is obtained from $\mathbf{x=L_G^\top \tilde{y}}$.
\end{itemize}

  \subsection{ Directed graph partitioning}
  It has been shown that  partitioning and clustering of directed graphs can play very important roles in a variety of applications related to machine learning \cite{malliaros2013clustering}, data mining and circuit synthesis and optimization \cite{micheli1994synthesis}, etc. However, the efficiency of existing methods for partitioning directed graphs strongly depends on the complexity of the underlying graphs \cite{malliaros2013clustering}. 
  
  In this work, we propose  a spectral method for directed graph partitioning problems. For an undirected graph, the eigenvectors corresponding to the first few smallest eigenvalues can be utilized for the spectral partitioning purpose \cite{spielmat1996spectral}. For a directed graph $G$ on the other hand, the left singular vectors  of  Laplacian $L_G$ will be required for directed graph partitioning. The eigen-decomposition  of its symmetrization $\mathbf{L_{G_{U}}}$ can be wirtten as 
  \begin{equation}
     \mathbf{ L_{G_{U}}=\sum_i \mu_{i} v_{i}v_{i}^\top},
  \end{equation}
  where $0=\mu_{1} \le...\mu_{k}$ and $\mathbf{v_{1},...,v_{k}}$, with $k\le n $ denote the Laplacian eigenvalues and eigenvectors, respectively. There may not be $n$ eigenvalues if  when there are some nodes without any outgoing edges.  In addition, the spectral properties of $L_{G_{U}}$ are more complicated since the eigenvalues always have multiplicity (either algebraic or geometric multiplicities).  For example, the eigenvalues according to the symmetrization of  the directed graph in Figure \ref{fig:partition} have a a few multiplicities: $\mu_2=\mu_3$, $\mu_4=\mu_5=\mu_6=\mu_7$, $\mu_9=\mu_{10}$.
  
 Therefore, we propose to exploit the eigenvectors  (left singular vectors of directed Laplacian)  corresponding to the first few different eigenvalues (singular values of directed Laplacian) for directed graph partitioning. For example, the partitioning result of directed graph in Figure \ref{fig:partition} will depend on the eigenvectors of $v_1,v_2,v_4,v_8$ that correspond to eigenvalues of $\mu_1,\mu_2,\mu_4,\mu_8$.  As shown in Fig (\ref{fig:partition}), the spectral partitioning results can quite different between the directed and undirected graph with the same set of nodes and edges.
 \begin{figure}[H]
\centering 
\includegraphics[scale=0.425]{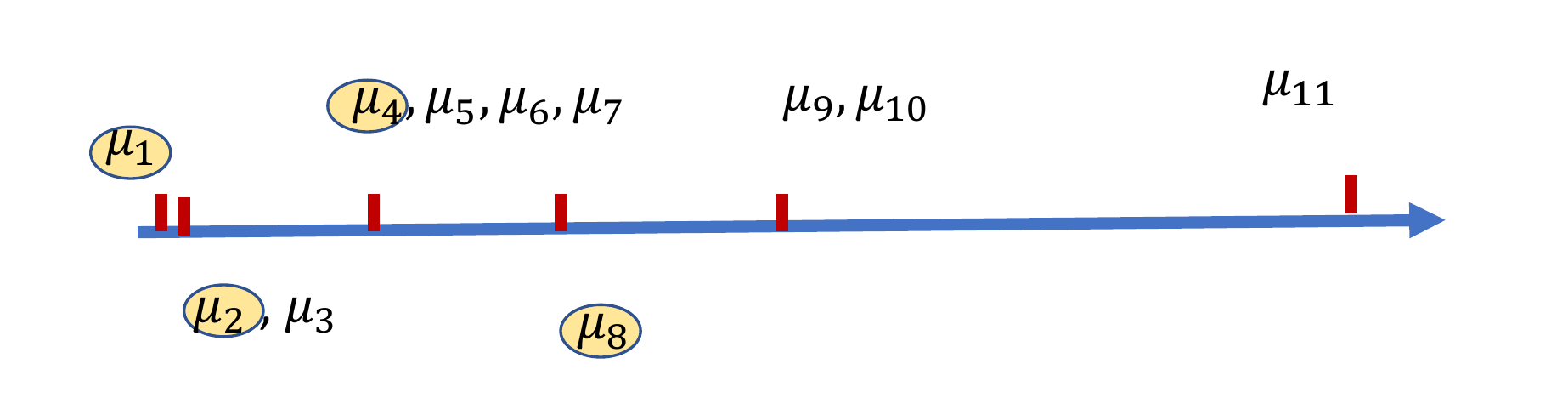}
\caption{Eigenvalues distribution of $L_{G_U}$  for the directed graph in Figure \ref{fig:partition}
\protect\label{fig:eig_distr}}
\vspace{0.1cm}
\end{figure} 
In general, it is possible to first extract a spectrally-similar directed graph before any of the prior  partitioning algorithms are applied. Since the proposed spectral sparsification algorithm   can well preserve the structural (global) properties of the original graphs, the   partitioning results obtained from the sparsified graphs will be very similar to the original ones. 



 \begin{figure}
\centering 
\includegraphics[width=0.225\textwidth]{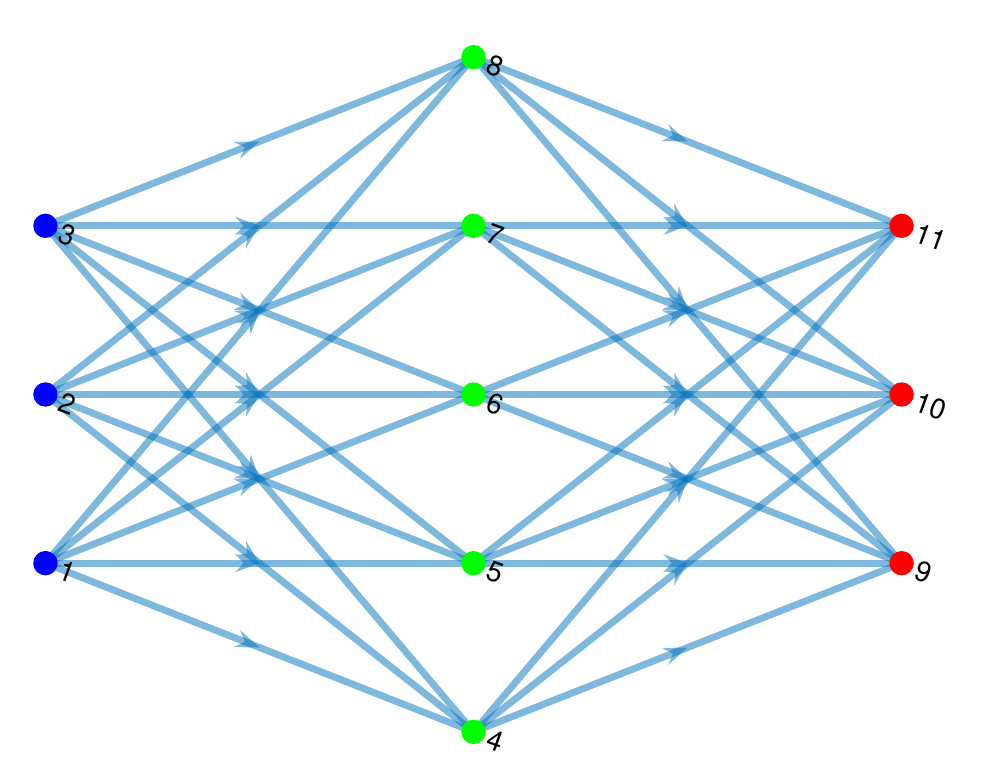}
\includegraphics[width=0.225\textwidth]{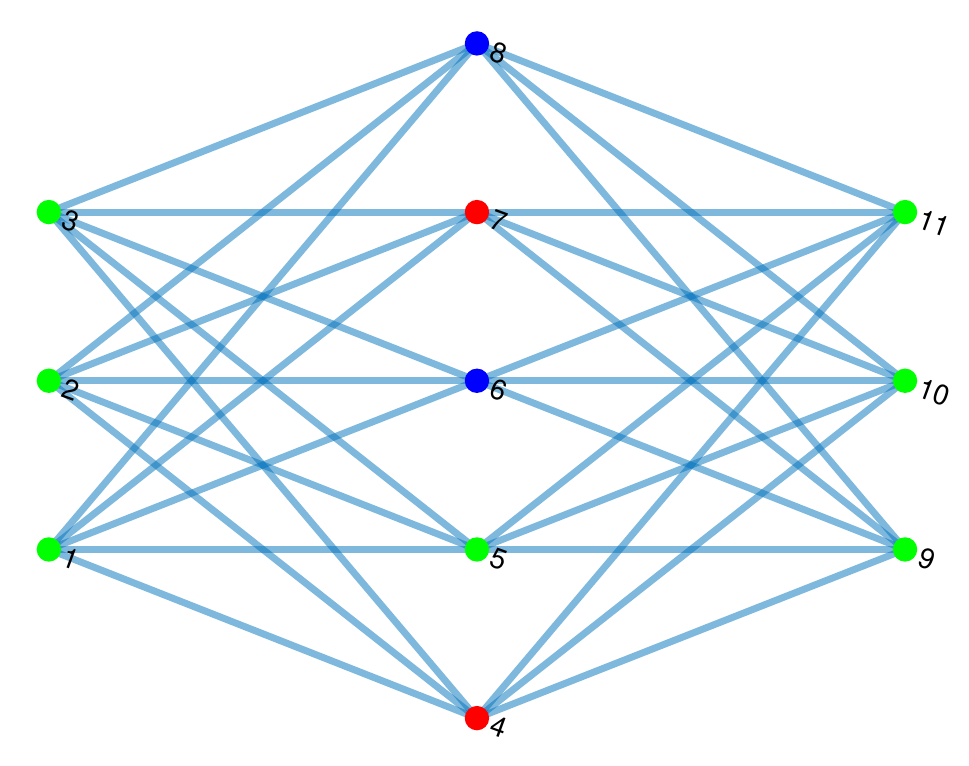}\caption{ Spectral partitioning of directed (left) and undirected graphs (right). The nodes within the same cluster are assigned the same color.
\protect\label{fig:partition}}
\vspace{0.15cm}
\end{figure}

  \vspace{-0.15cm}

\section{Experimental Results}\label{sec:result}
\begin{figure}
\centering \epsfig{file=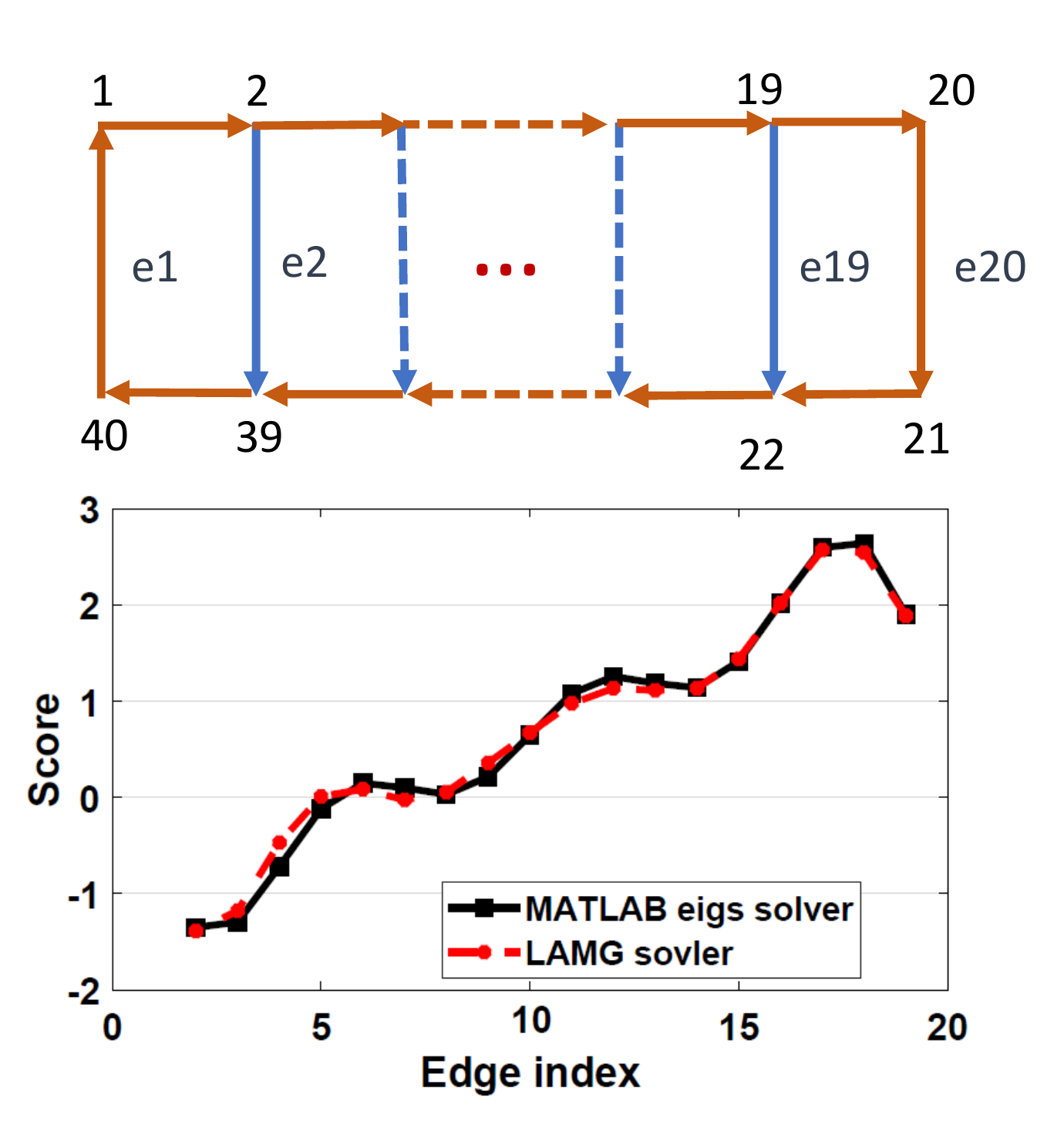, scale=0.2995060} \epsfig{file=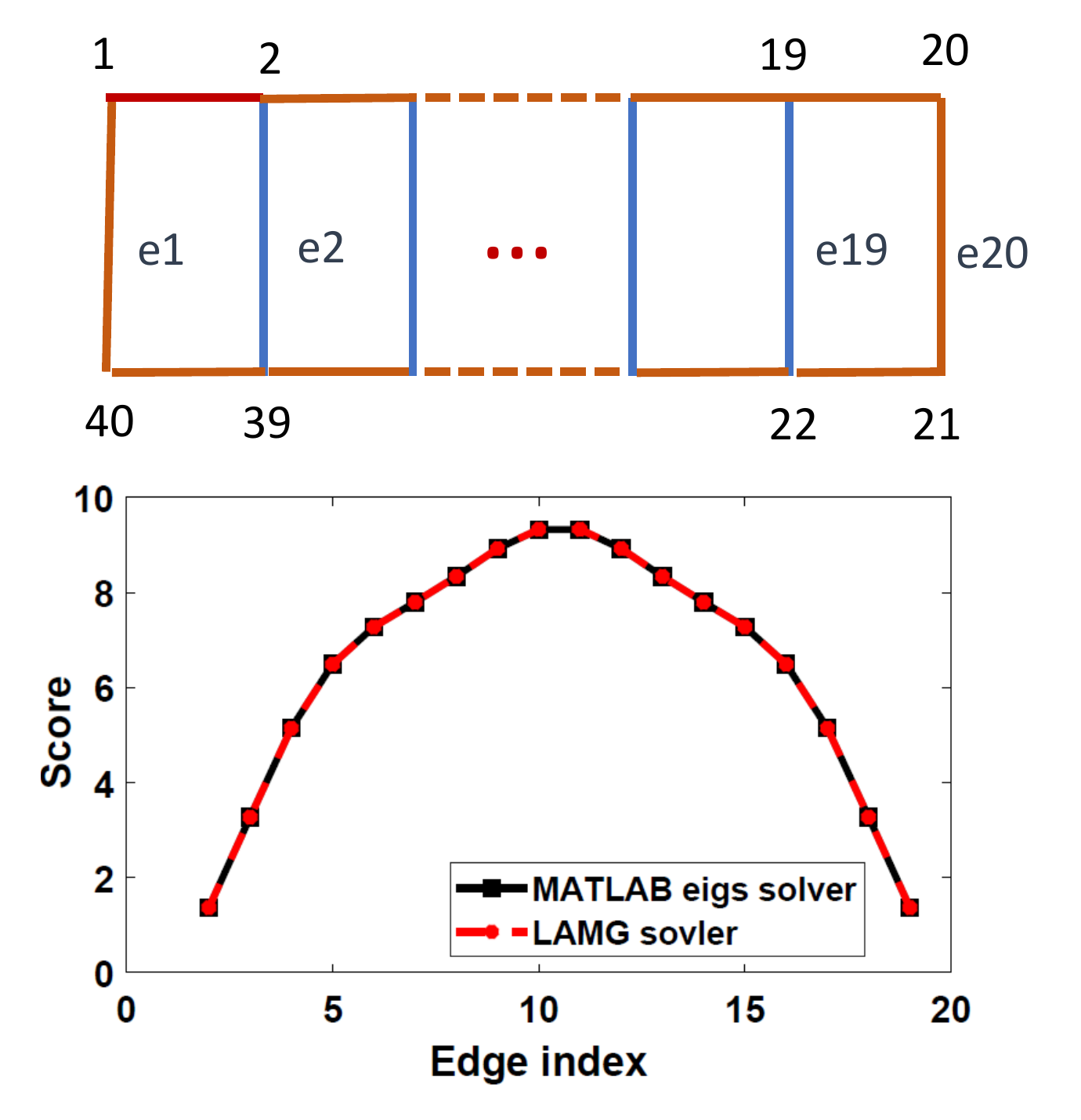, scale=0.2995060} 
\caption{The spectral sensitivities (scores) of off-subgraph edges ($e2$ to $e19$) for the directed  (left)  and   undirected graph (right).\protect\label{fig:score}}
\end{figure}
The proposed algorithm for spectral sparsification of directed graphs has been implemented using MATLAB and C++ \footnote{Code available at: https://github.com/web1799/directed$\_$graph$\_$sparsification}. Extensive experiments have been conducted to evaluate the proposed method with various types of directed graphs obtained from public-domain data sets \cite{davis2011matrix}. 


Figure \ref{fig:score} shows the spectral sensitivities of all the off-subgraph edges ($e2$ to $e19$) in both directed and undirected graphs calculated using MATLAB's ``eigs" function and the proposed method based on (\ref{eqn:Q}) using the LAMG solver, respectively. The edges of subgraphs for both directed and undirected graphs are represented with red color. The spectral sensitivities of all the off-subgraph edges ($e2$ to $e19$ in blue color) with respect to the dominant eigenvalues ($\mu_{max}$ or $\mu_{1}$) in both directed and undirected graphs are plotted. We observe that spectral sensitivities for directed and undirected graphs are drastically different from each other. The reason is that the spectral sensitivities for off-subgraph edges in the directed graph depend on the edge directions. It is also observed that the approximate spectral sensitivities calculated by the proposed $t$-step power iterations with the LAMG solver match the true solution very well for both directed and undirected graphs.

Table \ref{table:fiedler} shows more comprehensive results on directed graph spectral sparsification for a variety    of real-world directed graphs using the proposed method, where $|V_G|(|E_G|)$ denotes the number of nodes (edges) for the original directed graph $G$; $|E_{S^0}|$ and $|E_S|$ denote the numbers of edges in the initial   subgraph $S^0$ and final spectral sparsifier $S$. Notice that we will directly apply the Matlab's "eigs" function if the size of the graph is relatively small ($|E_{S^0}|<1E4$); otherwise we will switch to the LAMG solver for better efficiency when calculating the approximate generalized eigenvector $\mathbf{h_t}$. We report the total runtime for the eigsolver using either the LAMG solver or "eigs" function. $\frac{\mu_{\textrm{in}}}{\mu_{\textrm{max}}}$ denotes the reduction rate of the largest generalized eigenvalue of $\mathbf{L^+_{S_u}}  \mathbf{L_{G_u}}$. We also plot the detailed reduction rates of the largest generalized eigenvalue when adding different number of off-subgraph edges to the sparsifiers of graph ``gre$\_$115" and ``peta" in Figure \ref{fig:reduction_ratio}. It shows that the largest generalized eigenvalue can be effectively reduced if sufficient off-subgraph  edges are included into the sparsifier.


\begin{table}
\begin{center}

 \addtolength{\tabcolsep}{-2.5pt} \centering
\caption{Results  of directed graph spectral sparsification }
\label{table:result1}

\begin{tabular}{|c|c|c|c|c|c|c|}
 \hline   Test Cases & $|V_G|$ & $|E_G|$ & $\frac{|E_{S^{0}}|}{|E_G|}$& $\frac{|E_{S}|}{|E_G|}$ & time (s)&$\frac{\mu_{\textrm{in}}}{\mu_{\textrm{max}}}$  \\
  \hline gre\_115 & 1.1E2 &  4.2E2 & 0.46 &0.71  &0.05&8.2E4X \\
 \hline gre\_185 & 1.8E2 & 1.0E3 & 0.33& 0.46 & 0.14 &9.8E3X  \\
 \hline harvard500& 0.5E3 & 2.6E3&  0.31& 0.40& 0.64 &1.2E3X \\
  \hline cell1 & 0.7E4 &  3.0E4 & 0.31 &0.57  &3.10 &1.0E5X \\
  \hline pesa& 1.2E4 & 8.0E4 & 0.27& 0.51& 8.80 & 5.3E8X  \\
\hline big & 1.3E4 & 0.9E5 & 0.27&0.49& 12.86& 4.1E11X\\
   \hline gre\_1107 & 1.1E3 &  5.6E3 & 0.26 &0.39& 0.24 & 58X \\
 \hline wordnet3 & 0.8E5 & 1.3E5 & 0.64 &0.84 & 50.00 & 12X  \\
 \hline p2p-Gnutella31 & 0.6E5 & 1.5E5 & 0.35 &0.43 & 11.90 & 6X  \\
   \hline p2p-Gnutella05 & 8.8E3 & 3.2E4 & 0.23 &0.65 & 27.59 & 35X  \\
  \hline mathworks100 &1.0E2 &5.5E2 & 0.20&0.50 & 0.04 & 30X \\
  \hline ibm32 & 3.2E1 & 1.3E2 & 0.46&0.57 & 0.02 & 12X \\
  \hline
\end{tabular}
\label{table:fiedler}
\end{center}
\end{table}

 \begin{table}[ht]
 \begin{center}

 \addtolength{\tabcolsep}{-3.7pt} \centering
 \caption{ the errors between the exact and approximate  solutions of $L_G x=b$ with or without Gauss Seidel smoothing }
 \label{table:result2}
 \begin{tabular}{|c| c |c| c| c| c| c| c| c|c|  }
  \hline   Test Cases & gre$\_$115& gre$\_$185 & cell1 & pesa & big& gre$\_$1107& wordnet3 \\
  \hline w/o smooth. & 0.41& 0.42& 0.44 & 2.1E-4&4.3E-3&0.6&0.72\\
   w/ smooth. & 0.04&0.12&0.07&8.0E-9&1.1E-4&0.10&0.07\\
 \hline

 \end{tabular}
 \label{table:fiedler1}
 \end{center}
 \end{table}
 

\begin{figure}[H]
\centering 
\includegraphics[width=0.235\textwidth]{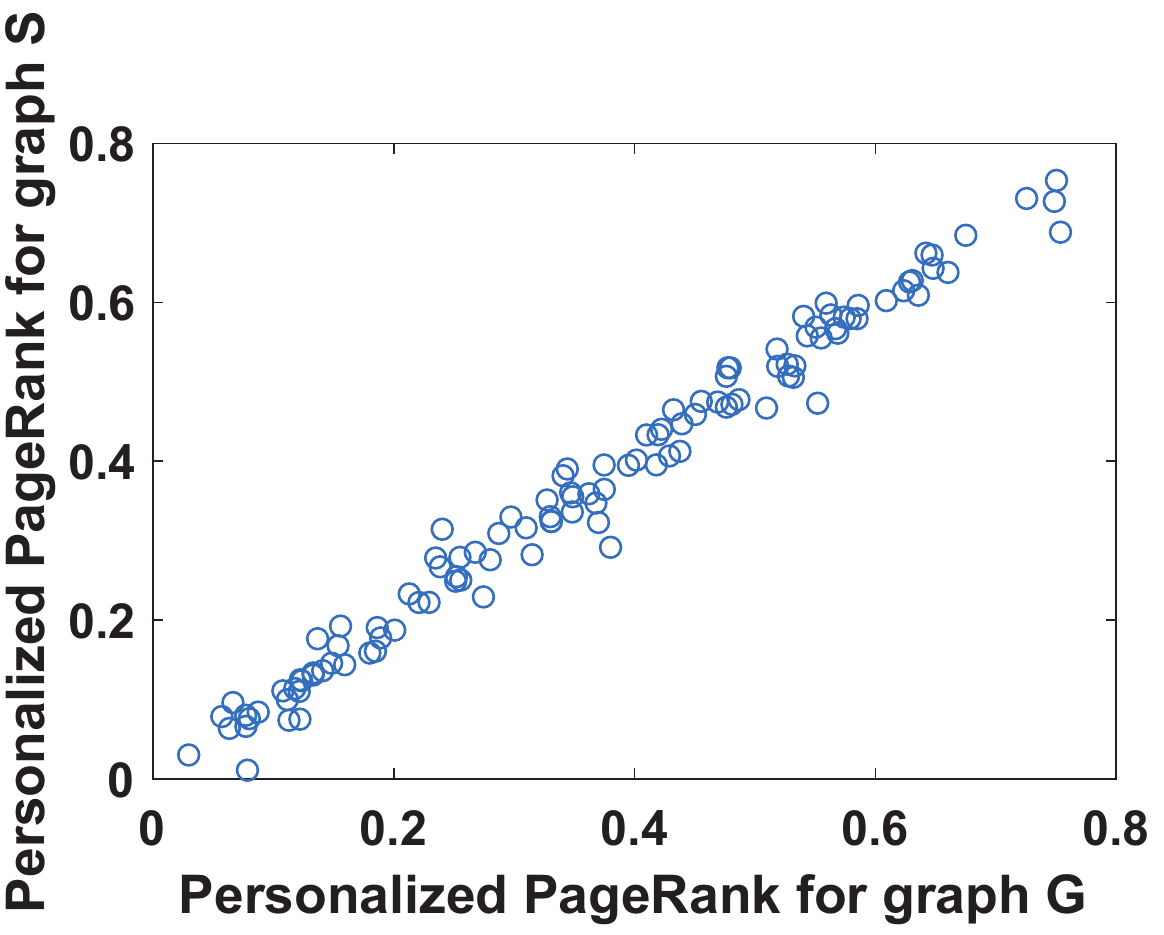}
%
\includegraphics[width=0.235\textwidth]{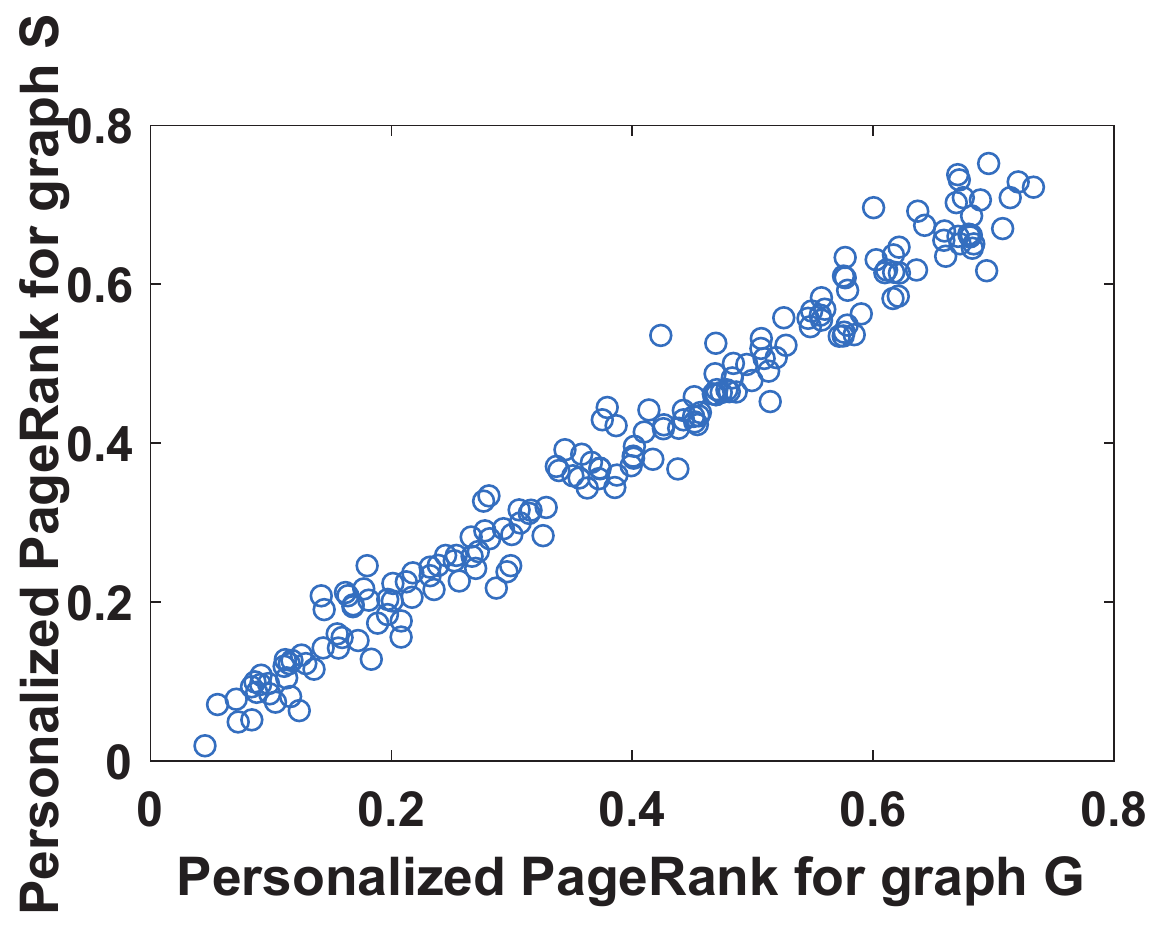} \caption{The correlation of the Personalized  PageRank between itself and its sparsifier for the 'gre$\_$115.mtx' graph  (left)  and  graph'gre$\_$185.mtx' (right) w/o smoothing. \protect\label{fig:personalized_pagerank}}
\vspace{0.15cm}
\end{figure}


Table \ref{table:fiedler1} shows the results of the directed Laplacian solver on different directed graphs. It reports relative errors between the exact solution and the solution calculated by the proposed solver with and without smoothing. It shows that errors can be dramatically reduced after smoothing, and our proposed solver can well approximate the true solution of $L_Gx=b$.
\begin{figure}[H]
\centering 
\includegraphics[scale=0.20193058]{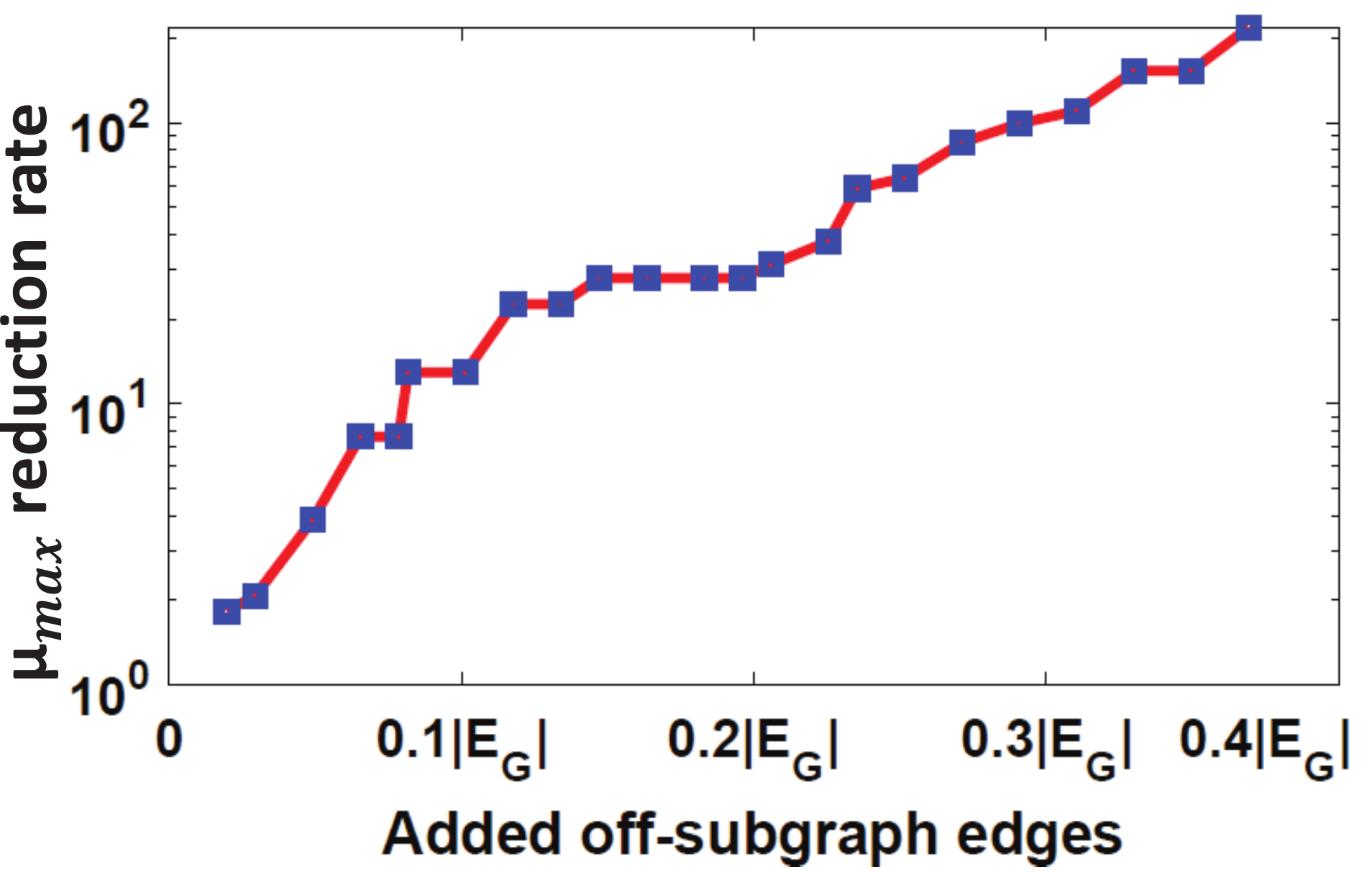}
%
\includegraphics[scale=0.20193058]{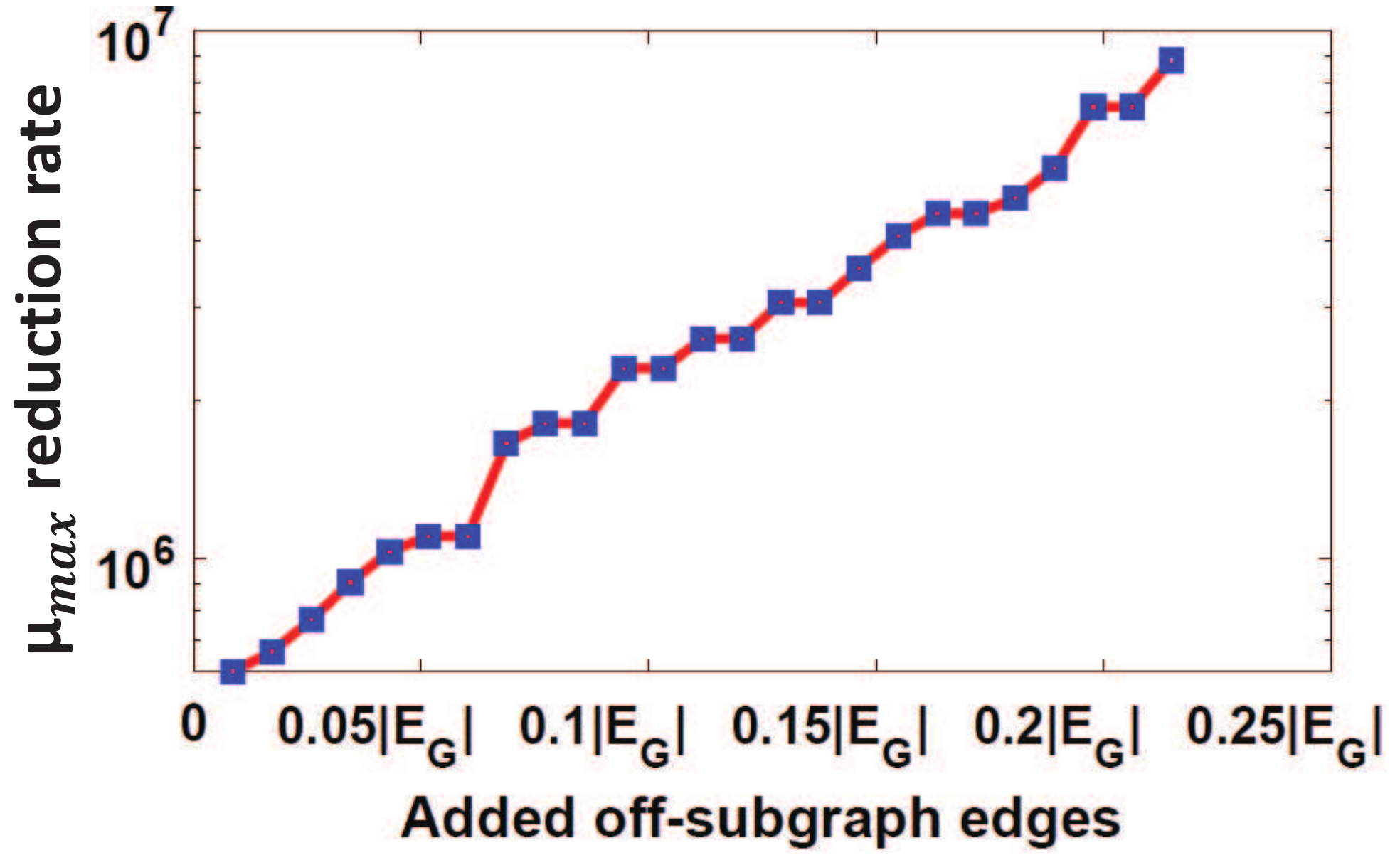} \caption{$\mu_{max}$ eigenvalue reduction rates for ``gre$\_$115" (left) and ``pesa" (right). \protect\label{fig:reduction_ratio}}
\end{figure}


Figure \ref{fig:personalized_pagerank} shows the personalized PageRank results on two graphs (``gre$\_$115" and ``gre$\_185$") and their sparsifiers. We can observe that personalized PageRank vectors match very well with the original ones.

Finally, we show the spectral graph partitioning results on the original directed graph Laplacian ${G_u}$ and its sparsifier ${S_u}$ in Figure \ref{fig:parition}. As observed, very similar partitioning results have been obtained, indicating  well preserved spectral properties within the spectrally-sparsified directed graph.

\begin{figure}[H]
\centering 
\includegraphics[scale=0.39304098]{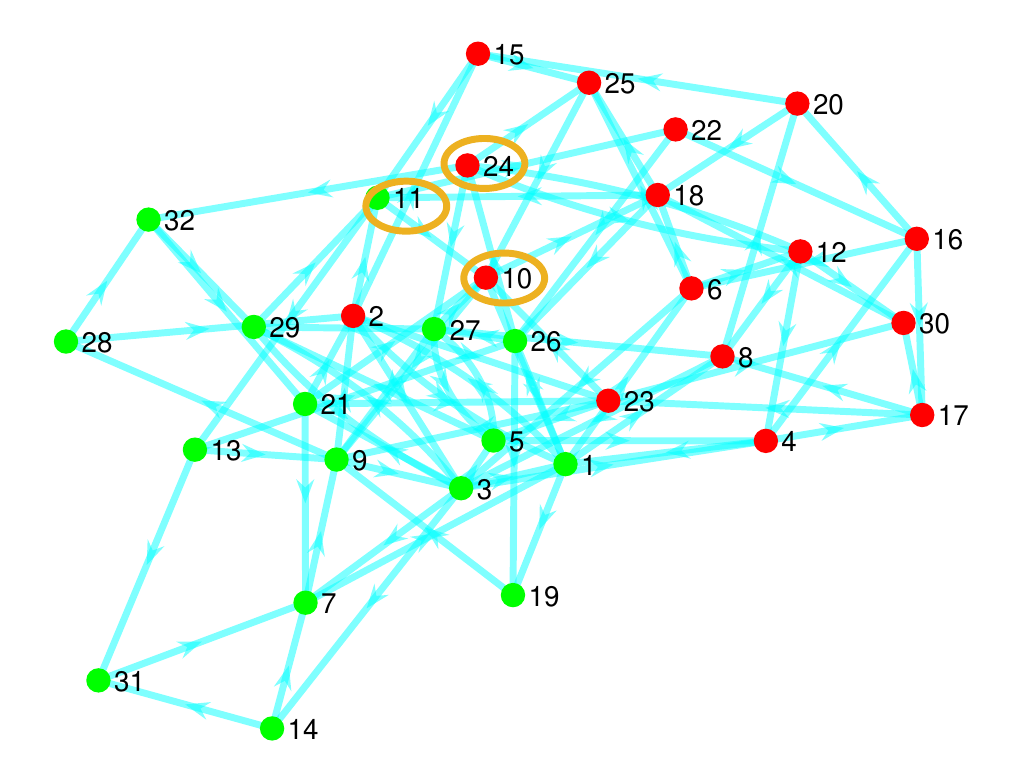}
\includegraphics[scale=0.39304098]{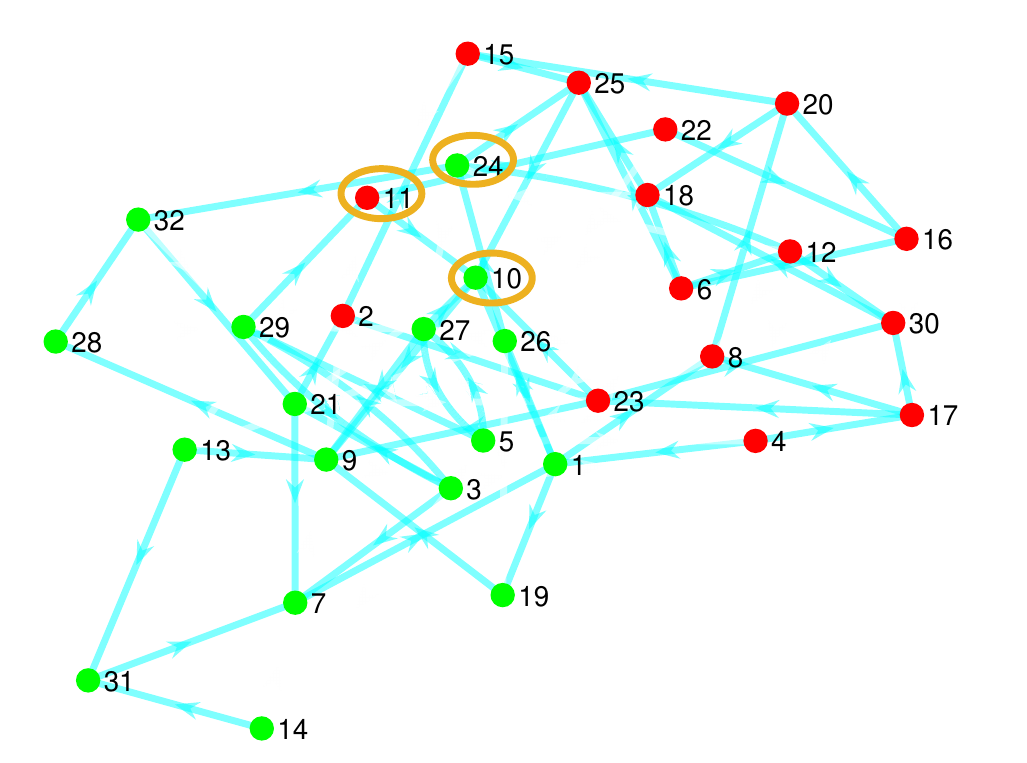} \caption{The partitioning results  between itself (left) and its sparsifier (right) for the 'ibm32.mtx' graph. \protect\label{fig:parition}}
\vspace{0.15cm}
\end{figure}





\vspace{-0.75cm}
\section{Conclusions}\label{conclusion}
This paper proves the existence of linear-sized spectral sparsifiers for general directed graphs, and proposes  a {{practically-efficient and unified spectral graph sparsification framework}}. Such a novel spectral sparsification approach allows sparsifying real-world, large-scale  directed and undirected graphs with guaranteed preservation of the original graph spectral properties. By exploiting a highly-scalable (nearly-linear complexity) spectral matrix perturbation analysis framework for constructing nearly-linear sized (directed) subgraphs,  it enables to well preserve the key eigenvalues and eigenvectors of the original (directed) graph Laplacians.  The proposed method has been validated using various kinds of directed graphs obtained from public domain sparse matrix collections, showing promising spectral sparsification and partitioning results for general directed graphs.
 \vspace{-0.17cm}
\bibliographystyle{abbrv}
{
\bibliography{dac18-spectralgraph,bigdata,graphRedu}  
}

\end{document}